\newcommand{\dltrans}[1]{{\sf dl}(#1)}
\newcommand{\variables}[1]{\ensuremath{{\sf Var}(#1)}\xspace}
\newcommand{\valueleft}{\ensuremath{v}\xspace}
\newcommand{\valueright}{\ensuremath{w}\xspace}
\newcommand{\dt}{\ensuremath{{\sf time}}\xspace}
\newcommand{\ta}{\ensuremath{\mathbb{T}\xspace}}
\newcommand{\bef}{\ensuremath{{\sf before}}\xspace} 
\newcommand{\aft}{\ensuremath{{\sf after}}\xspace}
\newcommand{\since}{\ensuremath{{\sf since}}\xspace}
\renewcommand{\between}{\ensuremath{{\sf between}}\xspace}
\newcommand{\unt}{\ensuremath{{\sf until}}\xspace}
\newcommand{\during}{\ensuremath{{\sf during}}\xspace}
\newcommand{\intervals}[1]{\ensuremath{\mathbb{N}^{#1}_\akb}\xspace}
\newcommand{\valtype}{\mathsf{valtype}}
\newcommand{\setvariables}{\NU}
\newcommand{\objectvariables}{\NV}
\newcommand{\linearmap}{\ensuremath{f}\xspace}
\newsavebox{\spacebox}
\let\phi\varphi
\newcommand{\tuple}[1]{({#1})}
\newcommand{\defeq}{\coloneqq}
\renewcommand{\vec}[1]{\boldsymbol{#1}}
\newcommand{\powset}[1]{\mathcal{P_{\text{\sf{fin}}}}\left(#1\right)}
\newcommand{\lang}[1]{\ensuremath{\mathbf{#1}}} % any "language" name
\newcommand{\Slang}{\lang{S}} % the set of all specifier expressions
\newcommand{\Inter}{\mathcal{I}} %used to denote an interpretation
\newcommand{\Zuweisung}{\mathcal{Z}} %used to denote a variable assignment
\newcommand{\asdom}[1]{\Phi^{#1}} % domain of annotation sets
\newcommand{\syntaxequality}{\kern0.6pt{:}\,}
\newcommand{\syntaxpatom}[2]{#1 @ #2} % annotated predicate atoms
\newcommand{\syntaxconcept}[2]{\syntaxpatom{#1}{#2}}
\newcommand{\syntaxsetrestriction}[2]{{#1}\,{\boldsymbol{:}}\,{#2}}
\newcommand{\syntaxintent}[2]{#1.#2}
\newcommand{\akb}{\ensuremath{\Omc}\xspace} % the generic knowledge base
\newcommand{\dlname}[1]{\ensuremath{\mathcal{#1}}\xspace}
\newcommand{\EL}{\dlname{EL}}
\newcommand{\dlNames}[1]{\ensuremath{\mathsf{N_{#1}}}\xspace}
\newcommand{\NC}{\dlNames{C}}
\newcommand{\NR}{\dlNames{R}}
\newcommand{\NI}{\dlNames{I}}
\newcommand{\NTimepoints}{\dlNames{T}}
\newcommand{\NIntervals}{\ensuremath{\mathsf{N^2_{T}}}\xspace}
\newcommand{\NIkb}[1]{\ensuremath{\mathsf{\dlNames{}_{\mathnormal{#1}}}}\xspace}
\newcommand{\NIkbstar}[1]{\ensuremath{\mathsf{\dlNames{}^\star_{\mathnormal{#1}}}}\xspace}
\newcommand{\NV}{\dlNames{V}}
\newcommand{\NU}{\dlNames{U}}
\newcommand{\freshconstant}{\ensuremath{{\sf f}}\xspace}
\newcommand{\ie}{i.e.,\xspace}
\providecommand\given{}
\newcommand\Set@Symbol[1][]{%
    \nonscript\:#1\vert{}
    \allowbreak{}
    \nonscript\:
    \mathopen{}}
\DeclarePairedDelimiterX{\set}[1]{\lbrace}{\rbrace}{%
    \renewcommand{\given}{\Set@Symbol[\delimsize]}
    #1
}
\DeclarePairedDelimiterX{\openAnnotation}[1]{\lfloor}{\rfloor}{\renewcommand{\given}{\Set@Symbol[\delimsize]}#1}
\DeclarePairedDelimiterX{\closedAnnotation}[1]{[}{]}{\renewcommand{\given}{\Set@Symbol[\delimsize]}#1}
\newcommand{\Hmc}{\ensuremath{\mathcal{H}}\xspace}
\newcommand{\Imc}{\ensuremath{\mathcal{I}}\xspace}
\newcommand{\Jmc}{\ensuremath{\mathcal{J}}\xspace}
\newcommand{\Kmc}{\ensuremath{\mathcal{K}}\xspace}
\newcommand{\Mmc}{\ensuremath{\mathcal{M}}\xspace}
\newcommand{\Omc}{\ensuremath{\mathcal{O}}\xspace}
\newcommand{\Rmc}{\ensuremath{\mathcal{R}}\xspace}
\newcommand{\Zmc}{\ensuremath{\mathcal{Z}}\xspace}
\newcommand{\mypar}[1]{\medskip\noindent\textbf{#1~~}}
\newcommand{\mn}[1]{\ensuremath{{\sf #1}}}
\newcommand{\ind}{\ensuremath{{\sf i}}\xspace}
\newcommand{\ann}{\ensuremath{{\sf a}}\xspace}
\newcommand{\dlliterhorn}{DL-Lite$^\Hmc_\mathit{horn}$\xspace}
\newcommand{\dllitea}{DL-Lite$^{\Hmc,@}_{\mathit{horn}}$\xspace}
\newcommand{\dlliteat}{DL-Lite$^{\Hmc,\mathbb{T},@}_{\mathit{horn}}$\xspace}
\renewcommand{\implies}{\ensuremath{\Rightarrow}\xspace}
\begin{document}

\title{
Geometric Models for (Temporally) Attributed Description Logics
} 
\author{Camille Bourgaux$^1$, Ana Ozaki$^2$, Jeff Z. Pan$^3$}
\institute{DI ENS, ENS, CNRS, PSL University \& Inria, Paris, France \and University of Bergen, Norway \and University of Edinburgh, United Kingdom}

\maketitle
\begin{abstract}
In the search for knowledge graph embeddings that could capture ontological knowledge, geometric models of existential rules have been recently introduced. 
It has been shown that   convex geometric regions  capture the so-called quasi-chained rules. 
Attributed description logics (DL) have been defined to bridge the gap between DL languages and knowledge graphs, whose facts often come with various kinds of annotations that may need to be taken into account for reasoning. 
In particular, temporally attributed DLs are enriched by specific attributes whose semantics allows for some temporal reasoning. 
Considering that geometric models and (temporally) attributed DLs are promising tools designed for knowledge graphs, this paper investigates their compatibility, focusing on the attributed version of a Horn dialect of the DL-Lite family. 
We first adapt the definition of geometric models to attributed DLs and show that every satisfiable ontology has a convex geometric model. 
Our second contribution is a study of the impact of temporal attributes. We show that a temporally attributed DL may not have a convex geometric model in general but we can recover geometric satisfiability by imposing some restrictions on the use of the temporal attributes.
\end{abstract}
\section{Introduction}

Knowledge graph embeddings 
are 
popular latent representations of knowledge graphs (KG). In the search for KG embeddings that could capture ontological knowledge (i.e., schema of KG), 
geometric models of existential rules have been recently introduced~\cite{DBLP:conf/kr/Gutierrez-Basulto18}. 
Such models have several advantages. Notably, they ensure that facts which are valid in the embedding are logically consistent and deductively closed w.r.t. the ontology, and they can also be used to find plausible missing ontology rules. 
It has been shown that   convex geometric regions  capture the so-called quasi-chained rules, a fragment 
of first-order Horn logic. 
Attributed description logics (DL) have been defined to bridge the gap between DL ontology languages and KG, 
whose facts often come with various kinds of annotations that may need to be taken into account for reasoning. 
In particular, they were introduced as a formalism for dealing with the 
 meta-knowledge present in KG, such as temporal validity, provenance, language, and others~\cite{KMOT2017:ADLs,DBLP:conf/ijcai/Krotzsch0OT18,bourgauxozaki}. 
As time is of primary interest in KG, attributed DLs have been enriched with temporal attributes, 
whose semantics allows for some temporal reasoning over discrete time \cite{DBLP:conf/birthday/OzakiKR19}. 
Considering that geometric models and (temporally) attributed DLs are promising tools designed for KG, this paper investigates their compatibility, focusing on the attributed version of a Horn dialect of the DL-Lite family. 

Our contributions are as follows:
\begin{itemize}
\item We adapt the notion of geometric models for ((temporally) attributed) DLs; in particular, we use an arbitrary linear map to combine the individual geometric interpretations instead of restricting ourselves to vector concatenation, and define satisfaction of concept or role inclusions directly based on geometric inclusion relationship between the regions that interpret  the concepts or roles. 
\item We show that every satisfiable attributed \dlliterhorn ontology has a convex geometric model but
there are  satisfiable \emph{temporally} attributed \dlliterhorn ontologies without such a model.
\item We exhibit restrictions on the use of temporal attributes that guarantee % under which satisfiable 
that temporally attributed 
\dlliterhorn ontologies %are guaranteed to 
have a convex geometric model.
\end{itemize}

We define attributed DLs and geometric models in Section~\ref{sec:geom-models}.
Then, in Section~\ref{sec:satisfiability}, we study the relationship between satisfiability 
and the existence of convex geometric models in \dlliterhorn.
We then extend our analysis for temporally attributed \dlliterhorn in Section~\ref{sec:addingtime}.
In Section~\ref{sec:relatedwork}, we discuss related works
and we conclude in Section~\ref{sec:conclusion}. 
Omitted proofs are given in the appendix. % in \cite{arxivVersion}

\section{Geometric Models for Attributed Description Logics}\label{sec:geom-models}

In this section, we recall the framework of attributed DLs and define geometric models in this context.

\subsection{Attributed DLs}\label{sec:attributedDLs}

We introduce attributed DLs by defining 
attributed DL-Lite~\cite{bourgauxozaki}, focusing on the \dlliterhorn dialect. The notions presented here  can be easily
adapted to other attributed DLs, e.g. \EL, as in~\cite{KMOT2017:ADLs,DBLP:conf/ijcai/Krotzsch0OT18}. 
Let \NC, \NR, and $\NI$ be countably infinite and mutually disjoint sets 
of   \emph{concept}, \emph{role},  and \emph{individual names}. 
We assume that \NI is divided into two sets, called $\dlNames{i}$ and $\dlNames{a}$,
and we  refer to the elements in $\dlNames{a}$ as \emph{annotation names}. 
  We consider   
  an additional set \setvariables of \emph{set variables}
  and a set \objectvariables of \emph{object variables}. 
  The set $\Slang$ of \emph{specifiers} contains the following expressions:
  \begin{itemize}
  \item set variables $X\in\setvariables$;
  \item \emph{closed specifiers} $\closedAnnotation{a_{1} \syntaxequality v_{1}, \dotsc, a_{n} \syntaxequality v_{n}}$; and
  \item \emph{open specifiers} $\openAnnotation{a_{1} \syntaxequality v_{1}, \dotsc, a_{n} \syntaxequality v_{n}}$,
  \end{itemize}
  where $a_i \in \dlNames{a}$  and $v_i$ is either  an individual name in $\dlNames{a}$,
  an object variable in \objectvariables,
  or an expression of the form 
  $\syntaxintent{X}{a}$, with $X$ a set variable in $\setvariables$ and $a$ an
  individual name in $\dlNames{a}$. 
  We use  $\syntaxintent{X}{a}$ to refer  to the (finite, possibly empty) set of
  all values of attribute $a$ in an annotation set $X$.
  A \emph{ground specifier} is a closed or open specifier built only over $\dlNames{a}$.

  \mypar{Syntax.}
 A  \dllitea  \emph{concept} (resp. \emph{role}) \emph{assertion} is an expression $\syntaxpatom{A(a)}{S}$
 (resp. $\syntaxpatom{ R(a,b)}{S}$), with $A\in\NC$ (resp. $R\in\NR$), $a,b\in\dlNames{i}$, and $S\in\Slang$ a
 ground closed specifier. 
  A  \dllitea  \emph{role inclusion} is an expression of the form:
 \begin{align}
 \syntaxsetrestriction{X}{S}  \quad (P\sqsubseteq Q),\label{eq_gci}
 \end{align}
 where  $S\in\Slang$ is a 
 closed or open specifier, $X \in\setvariables$ is a set variable, 
  and $P,Q$ are role expressions built according to the following syntax:
  \begin{align}
  P \coloneqq \syntaxpatom{R}{S}\mid \syntaxpatom{R^-}{S}, \quad\quad  Q \coloneqq P \mid \neg P
 \label{eq_dllite_roles}
 \end{align}
 with $S \in \Slang$, $R \in \NR$. 
A  \dllitea  \emph{concept inclusion} is %an expression 
of the form:
 \begin{align}
 \syntaxsetrestriction{X_1}{S_1}, \ldots, \syntaxsetrestriction{X_n}{S_n} \quad  (\bigsqcap_{i=1}^k B_i\sqsubseteq C),\label{eq_gci}
 \end{align}
 where $k,n\geq 1$,  $S_1, \dots, S_n\in\Slang$ are
 closed or open specifiers, $X_1,\dots, X_n\in\setvariables$ are set variables, 
 and $B_i,C$ are concept expressions built according to:
  \begin{align}
  B\coloneqq\syntaxpatom{A}{S} \mid \exists P, \quad 
   C \coloneqq B \mid \bot,  
 \label{eq_dllite_concepts}
 \end{align}
 where $P$ is as in Equation~\ref{eq_dllite_roles}, $A \in \NC$ and $S \in \Slang$. 
 Role expressions of the form $P$ are called \emph{roles} 
 and concept expression of the form $B$ are \emph{basic concepts}. 
 We further require that all object variables are \emph{safe}, that is, 
 if they occur on the right side 
 of a concept/role inclusion or in a specifier associated with a set variable
 occurring on the right side then they  must also occur on the left side of the inclusion
 (or in a specifier associated with a set variable occurring on the left).
 
 A \dllitea  \emph{ontology}  is a set of \dllitea assertions, role and concept inclusions.
 We say that an inclusion is \emph{positive} if it does not contain negation or $\bot$.  
 Also, we say that a \dllitea ontology is \emph{ground} if it does not contain   variables. 
 To simplify notation, we omit the specifier $\openAnnotation{}$ (meaning ``any annotation set'')
 in role or concept expressions.
 In this sense, any \dlliterhorn axiom is also a \dllitea  axiom.
 Moreover, we omit prefixes of the form $\syntaxsetrestriction{X}{\openAnnotation{\ }}$, which
   state that there is no restriction on $X$.

\mypar{Semantics.} An \emph{interpretation} $\Inter = \tuple{\Delta^\Inter_{\ind},\Delta^\Inter_{\ann},\cdot^\Inter}$ of
an attributed DL consists of a non-empty domain $\Delta^\Inter_{\ind}$ of
individuals, a non-empty domain $\Delta^\Inter_{\ann}$ of
annotations,
and a function $\cdot^\Inter$. 
Individual names $a\in\dlNames{i}$ are interpreted as elements $a^\Inter\in\Delta^\Inter_{\ind}$ and individual names $a\in\dlNames{a}$ are interpreted as elements $a^\Inter\in\Delta^\Inter_{\ann}$.
To interpret annotation sets, we use the set
$\asdom{\Inter}\defeq\set{\Sigma\subseteq\Delta^\Inter_{\ann}\times \Delta^\Inter_{\ann}\mid\Sigma \text{ is finite}}$ of all
finite binary relations over $\Delta^\Inter_{\ann}$.
Each concept name $A\in\NC$ is interpreted as a set $A^\Inter\subseteq \Delta^\Inter_{\ind}\times\asdom{\Inter}$ of elements with annotations, and each role name
$ R\in \NR$ is interpreted as a set $R^\Inter\subseteq \Delta^\Inter_{\ind}\times\Delta^\Inter_{\ind}\times\asdom{\Inter}$ of pairs of elements with annotations.
Each element (pair of elements) may appear with multiple different
annotations.
$\Inter$ \emph{satisfies} a concept assertion $\syntaxpatom{A(a)}{\closedAnnotation{a_1\syntaxequality v_1, \dotsc, a_n\syntaxequality v_n}}$ if
$\tuple{a^\Inter,\{\tuple{a_1^\Inter,v_1^\Inter},\ldots,\tuple{a_n^\Inter,v_n^\Inter}\}}\in A^\Inter$. 
Role assertions are interpreted analogously.
Expressions with free set or object variables are interpreted using variable assignments
$\Zuweisung$ mapping object variables $x \in \objectvariables$ to elements
$\Zuweisung(x) \in \Delta^\Imc_{\ann}$
and set variables  $X\in \setvariables$ to finite binary relations $\Zuweisung(X) \in \asdom{\Inter}$. 
For convenience, we also extend variable assignments to individual names, setting $\Zuweisung(a)=a^\Inter$ for every $a\in\dlNames{a}$. 
A specifier $S\in\Slang$ is interpreted as a set $S^{\Inter,\Zuweisung}\subseteq\asdom{\Inter}$
of matching annotation sets. We set $X^{\Inter,\Zuweisung}\defeq\{\Zuweisung(X)\}$ 
for variables $X\in\setvariables$. The semantics of closed specifiers is defined as: 
\begin{itemize} 
\item $\closedAnnotation{a\syntaxequality v}^{\Inter,\Zuweisung}\defeq\{\{\tuple{a^\Inter, \Zuweisung(v)}\}\}$ 
where $v \in \dlNames{a}\cup\NV$;
\item $\closedAnnotation{a\syntaxequality X.b}^{\Inter,\Zuweisung}\defeq
\{\{\tuple{a^\Inter,\delta} \mid  \tuple{b^\Inter,\delta}\in\Zuweisung(X)\}\}$;
\item $\closedAnnotation{a_1\syntaxequality v_1, \dotsc, a_n\syntaxequality v_n}^{\Inter,\Zuweisung}\defeq \{\bigcup_{i=1}^n F_i\}$ where $\{F_i\}=  \closedAnnotation{a_i\syntaxequality v_i}^{\Inter,\Zuweisung}$ for all $1\leq i\leq n$.
\end{itemize}
$S^{\Inter,\Zuweisung}$ therefore is a singleton set for variables and closed specifiers. For open specifiers, however, we define
$\openAnnotation{a_1\syntaxequality v_1, \dotsc, a_n\syntaxequality v_n}^{\Inter,\Zuweisung}$ to be the set:
\[\{ F\subseteq\asdom{\Inter}\mid F\supseteq G\text{ for }\{G\}=\closedAnnotation{a_1\syntaxequality v_1, \dotsc, a_n\syntaxequality v_n}^{\Inter,\Zuweisung}\}.\]
Now given $A\in\NC$, $ R\in \NR$, and $S\in\Slang$, we define:
\begin{align*}
(\syntaxpatom{A}{S})^{\Inter,\Zuweisung}
	& \defeq \{\delta \mid \tuple{\delta,F}\in A^\Inter \text{ for some }F\in S^{\Inter,\Zuweisung}\}, \\
(\syntaxpatom{R}{S})^{\Inter,\Zuweisung}
	& \defeq \{\tuple{\delta,\epsilon} \mid \tuple{\delta,\epsilon,F}\in R^\Inter \text{ for some } F\in S^{\Inter,\Zuweisung}\}.
\end{align*}
Further DL expressions are defined as: 
$(\syntaxpatom{R^-}{S})^{\Inter,\Zuweisung} \defeq\{(\gamma,\delta)\mid  \tuple{\delta,\gamma}\in (\syntaxpatom{R}{S})^{\Inter,\Zuweisung}\}$, 
$\neg P^{\Inter,\Zuweisung} \defeq(\Delta^\Inter_{\ind}\times\Delta^\Inter_{\ind})\setminus P^{\Inter,\Zuweisung}$, 
$\exists P^{\Inter,\Zuweisung} \defeq\{\delta\mid \text{there is }\tuple{\delta,\epsilon}\in P^{\Inter,\Zuweisung} \}$,  $(B_1\sqcap B_2)^{\Inter,\Zuweisung}  \defeq B^{\Inter,\Zuweisung}_1 \cap B^{\Inter,\Zuweisung}_2$, 
$ \bot^{\Inter,\Zuweisung} \defeq\emptyset$. 
$\Inter$ \emph{satisfies} a concept inclusion of the form \eqref{eq_gci} if,
for all variable assignments $\Zuweisung$ that satisfy
$\Zuweisung(X_i)\in S^{\Inter,\Zuweisung}_i$ for all $1\leq i\leq n$,
we have $(\bigsqcap_{i=1}^k B_i)^{\Inter,\Zuweisung}\subseteq C^{\Inter,\Zuweisung}$.
Satisfaction of role inclusions is defined analogously.
An interpretation $\Inter$ satisfies an ontology $\Omc$, or is a \emph{model} of $\Omc$, if it satisfies all of its axioms.
As usual, $\models$ denotes 
the induced
logical entailment relation.

For ground specifiers $\{S,T\}\subseteq \Slang$, 
we write $S\implies T$ if 
$T$ is an 
open specifier, and the set of attribute-value pairs $a : b$ in
$S$ is a superset of the set of attribute-value pairs in $T$.

\subsection{Geometric Models}

We now define the geometric interpretations of attributed relations. 
Let $m$ be an integer and $\linearmap : \mathbb{R}^m\times\mathbb{R}^m\mapsto \mathbb{R}^{2\cdot m}$ 
be a fixed but arbitrary  linear map satisfying the following: 
\begin{enumerate}[label=\textit{(\roman*)}]
\item \label{item1} the restriction of $\linearmap$ to $\mathbb{R}^m\times \{0\}^{m}$ is injective;
\item \label{item2} the restriction of $\linearmap$ to $ \{0\}^{m}\times \mathbb{R}^m$ is injective;
\item  \label{item3} $\linearmap(\mathbb{R}^m\times \{0\}^m)\cap \linearmap(\{0\}^m\times \mathbb{R}^m)=\{0^{2\cdot m}\}$;
\end{enumerate} 
where $0^{m}$ denotes the vector $(0,\ldots,0)$ with $m$ zeros. 
Intuitively, individuals will be interpreted as vectors from $\mathbb{R}^m$ and $\linearmap$ will be used to combine two vectors to interpret pairs of individuals.

\begin{definition}[Geometric Interpretation]\label{def:sat-assertions}
An $m$-dimensional $\linearmap$-geometric interpretation $\eta$ of $(\NC,\NR,\dlNames{i},\dlNames{a} )$ assigns 
\begin{itemize}
\item to each $A\in\NC$ and  ground $S\in\Slang$ a region $\eta(\syntaxpatom{A}{S})\subseteq \mathbb{R}^{m}$,
\item to each $R\in \NR$ and  ground $S\in\Slang$ a region $\eta(\syntaxpatom{R}{S})\subseteq \mathbb{R}^{2\cdot m}$, 
and 
\item to each $a\in \dlNames{i}$ a vector $\eta(a)\in\mathbb{R}^m$. 
\end{itemize}
Moreover, for all 
$\{S,T\}\subseteq \Slang$ and $E\in\NC\cup\NR$, 
if $S\implies T$ then $\eta(\syntaxpatom{E}{S})\subseteq  \eta(\syntaxpatom{E}{T})$. 
We say that $\eta$ is \emph{convex} if, for every $E\in\NC\cup\NR$, every ground
$S\in\Slang$, every $\vec{v}_1,\vec{v}_2\in \eta(\syntaxpatom{E}{S})$, and every $\lambda\in [0,1]$, if $\vec{v}_1,\vec{v}_2\in \eta(\syntaxpatom{E}{S})$ then
$(1-\lambda)\vec{v}_1+\lambda \vec{v}_2\in \eta(\syntaxpatom{E}{S})$. 

The interpretation of ground complex concept or role expressions is as follows. 
Assume all specifiers occurring in expressions below are ground, $R\in\NR$,  $P$ is a role, and $B, B_i$ are basic concepts. 
Then,
\begin{itemize}
\item $\eta(\syntaxpatom{R^-}{S})\defeq\{f(\delta,\delta')\mid f(\delta',\delta)\in\eta(\syntaxpatom{R}{S})\}$,
\item $\eta(\neg P)\defeq\mathbb{R}^{2\cdot m} \setminus \eta(P)$,
\item $\eta(\exists P)\defeq\{\delta\mid \exists \delta',\linearmap(\delta,\delta')\in \eta(P) \}$,
\item $\eta(\bigsqcap_{i=1}^n B_i)\defeq\bigcap_{i=1}^n\eta(B_i)$,
\item $\eta(\bot)\defeq \emptyset$.
\end{itemize}
\end{definition}
We may omit ``of $(\NC,\NR,\dlNames{i},\dlNames{a} )$''
when we speak about $m$-dimensional $\linearmap$-geometric interpretations. The interest of geometric interpretations is that concept and role assertions translate into membership in geometric regions and ground concept or role inclusions translate into geometric inclusions.

\begin{definition}[Satisfaction of Ground Axioms]
An $m$-dimensional $\linearmap$-geometric interpretation $\eta$ satisfies 
\begin{itemize}
\item a concept assertion $\syntaxpatom{A(a)}{S}$, denoted $\eta\models \syntaxpatom{A(a)}{S}$, if $\eta(a) \in \eta(\syntaxpatom{A}{S})$;
\item a role assertion $\syntaxpatom{R(a,b)}{S}$, denoted $\eta\models \syntaxpatom{R(a,b)}{S}$, if 
 $  \linearmap(\eta(a),\eta(b)) \in \eta(\syntaxpatom{R}{S})$; %and
 % Finally, it satisfies 
 \item a ground role inclusion $P\sqsubseteq Q$, denoted 
$\eta\models P\sqsubseteq Q$, if   
$\eta(P)\subseteq \eta(Q)$;
\item a ground concept inclusion $\bigsqcap_{i=1}^n B_i\sqsubseteq C$, denoted 
$\eta\models \bigsqcap_{i=1}^n B_i\sqsubseteq C$, if we have that   
$\eta(\bigsqcap_{i=1}^n B_i)\subseteq \eta(C)$. 
\end{itemize}
\end{definition}

We are ready for the first theorem which establishes that our more general
notion of geometric models still has the same properties of the geometric models
 originally proposed~\cite{DBLP:conf/kr/Gutierrez-Basulto18}.

\begin{theorem}\label{thm:linearmapTh}
Let  $\eta$ be an $m$-dimensional $f$-geometric interpretation. For every linear map $f'$ satisfying (i)-(iii),
the $m$-dimensional $f'$-geometric interpretation $\eta'$ defined as:
\begin{itemize}
\item $\eta'(a):=\eta(a)$, for all $a\in \dlNames{i}$;
\item $\eta'(\syntaxpatom{A}{S}):=\eta(\syntaxpatom{A}{S})$, for all $A\in\NC$ and  ground $S\in\Slang$; and 
\item $\eta'(\syntaxpatom{R}{S}):=\{f'(\delta,\delta')\mid f(\delta,\delta')\in\eta(\syntaxpatom{R}{S})\}$, for all $R\in\NR$ and  ground $S\in\Slang$;
\end{itemize}
is such that $\eta\models\alpha$ iff $\eta'\models\alpha$, for all ground axioms $\alpha$. 
\end{theorem}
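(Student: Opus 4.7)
\medskip
\noindent\textbf{Proof plan.}
The plan is to show that every linear map satisfying \textit{(i)}--\textit{(iii)} is automatically a bijection of $\mathbb{R}^{2\cdot m}$, and then to use the transport map $h \defeq f' \circ f^{-1}$ to send the role regions of $\eta$ to those of $\eta'$ while leaving individual vectors and basic concept regions untouched. Preservation of satisfaction of every ground axiom then reduces to pushing $h$ through the recursive clauses of the geometric interpretation.

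First I would verify that $f$ is injective on all of $\mathbb{R}^{2\cdot m}$: if $f(x,y) = 0^{2\cdot m}$, linearity gives $f(x,0^m) = -f(0^m,y)$, so this common vector lies in $f(\mathbb{R}^m\times\{0\}^m)\cap f(\{0\}^m\times\mathbb{R}^m)$, which by \textit{(iii)} forces $f(x,0^m) = f(0^m,y) = 0^{2\cdot m}$; then \textit{(i)} and \textit{(ii)} give $x = y = 0^m$. Being an injective linear self-map of a finite-dimensional space, $f$ (and likewise $f'$) is a bijection, so $h \defeq f'\circ f^{-1}$ is a linear bijection of $\mathbb{R}^{2\cdot m}$. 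By construction $\eta'(\syntaxpatom{R}{S}) = h(\eta(\syntaxpatom{R}{S}))$ for every $R\in\NR$ and ground $S\in\Slang$, and the monotonicity condition along $S\implies T$ is inherited from $\eta$ because $h$ is a function.

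Next I would prove by structural induction on ground expressions that $\eta'(P) = h(\eta(P))$ for every role expression $P$ and $\eta'(B) = \eta(B)$ for every basic concept $B$. For $R^-$, the clause unfolds to $f'(\delta',\delta) \in h(\eta(\syntaxpatom{R}{S}))$ iff $f(\delta',\delta) \in \eta(\syntaxpatom{R}{S})$, matching the required definition. Negation uses $\mathbb{R}^{2\cdot m}\setminus h(X) = h(\mathbb{R}^{2\cdot m}\setminus X)$, which is valid because $h$ is a bijection. The existential case is the crux:
\[\eta'(\exists P) = \{\delta \mid \exists\delta',\ f'(\delta,\delta')\in h(\eta(P))\} = \{\delta \mid \exists\delta',\ f(\delta,\delta')\in \eta(P)\} = \eta(\exists P),\]
using bijectivity of $h$ in the middle step. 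Conjunctions and $\bot$ are immediate, closing the induction.

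Finally, every ground axiom is handled: concept assertions and concept inclusions reduce to identities between the corresponding regions of $\eta$ and $\eta'$; a role assertion $\syntaxpatom{R(a,b)}{S}$ translates via $f'(\eta(a),\eta(b)) \in h(\eta(\syntaxpatom{R}{S}))$ iff $f(\eta(a),\eta(b)) \in \eta(\syntaxpatom{R}{S})$; and a role inclusion $P\sqsubseteq Q$ becomes $h(\eta(P))\subseteq h(\eta(Q))$, equivalent to $\eta(P)\subseteq \eta(Q)$ since $h$ is a bijection (the case $Q = \neg P$ being covered by the negation clause). The main obstacle is the opening injectivity step: conditions \textit{(i)}--\textit{(iii)} are stated only on the two coordinate ``axes'', so it is not immediate that they suffice for global injectivity of $f$ on $\mathbb{R}^{2\cdot m}$; once this is secured, the remainder is routine structural bookkeeping.
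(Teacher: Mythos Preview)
Your proof is correct and follows essentially the same approach as the paper: both establish that any linear map satisfying \textit{(i)}--\textit{(iii)} is bijective by showing its kernel is trivial (via the decomposition $f(x,0^m)=-f(0^m,y)$ together with \textit{(iii)}, then \textit{(i)} and \textit{(ii)}), and then transport role regions along the isomorphism $h=f'\circ f^{-1}$. The paper's appendix additionally records a convex-hull preservation property of $h$, but for the theorem as stated your explicit structural induction through the axiom types is in fact more direct and complete than what the paper spells out.
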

\begin{proof}[Sketch] This result follows from the fact that there is an isomorphism 
between the regions in $\eta$ and $\eta'$. %(see appendix for full details on the isomorphism).
\end{proof}

To define when a geometric interpretation is a model of a (possibly not ground) \dllitea ontology, we need to define when such an interpretation satisfies non-ground concept or role inclusions. To do so, we use a standard interpretation built from the geometric interpretation. 
Given a ground specifier $S$ and an annotation name $\star$, we define $F^\star_S=\{(a,b)\mid a\syntaxequality b \text{ occurs in } S\}\cup\{(\star,\star)\mid \text{if }S\text{ is open}\}$. Given an $m$-dimensional $\linearmap$-geometric interpretation $\eta$, 
a subset $D_\ann$ of $\dlNames{a}$ and an annotation name $\star\in \dlNames{a}\setminus D_\ann$, 
we define an interpretation $\Imc(\eta, D^\star_\ann)$
as follows. 
\begin{itemize}
\item $\Delta^{\Imc(\eta, D^\star_\ann)}_\ind=\mathbb{R}^m$ and, for all $a\in\dlNames{i}$, $a^{\Imc(\eta, D^\star_\ann)}=\eta(a)$;
\item $\Delta^{\Imc(\eta, D^\star_\ann)}_\ann= \dlNames{a}$ and for all $a\in \dlNames{a}$, $a^{\Imc(\eta, D^\star_\ann)}=a$;
\item  $A^{\Imc(\eta, D^\star_\ann)}=\{(\delta,F^\star_S)\mid \delta \in\eta(\syntaxpatom{A}{S}), S \text{ built on }D_\ann\}$, for all $A\in\NC$;
\item $R^{\Imc(\eta, D^\star_\ann)}=\{(\delta, \epsilon ,F^\star_S)\mid  \linearmap(\delta, \epsilon) \in\eta(\syntaxpatom{R}{S}), S \text{ built on }D_\ann\}$, for all $R\in\NR$.
\end{itemize}

The next proposition shows that $\eta$ and $\Imc(\eta, D^\star_\ann)$ satisfy the same ground axioms built using only annotation names from $D_\ann$. It follows in particular that $\eta$ satisfies all ground axioms of an ontology $\Omc$ iff $\Imc(\eta, \NIkbstar{\Omc})$ does, where $\NIkb{\Omc}$ is the set of annotation names from $\dlNames{a}$ that occur in $\Omc$. 

\begin{restatable}{theorem}{lemfundam}
\label{lem:fundam}
Let $\eta$ be an $m$-dimensional $\linearmap$-geometric interpretation. 
Let $\alpha$ be a ground axiom, \ie $\alpha$ is either a concept/role assertion 
 or a ground concept/role inclusion. 
Let $D_\alpha$ be the set of annotation names that occur in $\alpha$ and let $D$ be a subset 
of $\dlNames{a}$ such that $D_\alpha \subseteq D$. Let $\star$ be an annotation name that does not occur in $D$. 
Then, the following holds: $\eta\models \alpha$ iff $ \Imc(\eta, D^\star)\models \alpha$. 
\end{restatable}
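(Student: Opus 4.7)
The plan is to prove the theorem via an intermediate claim: \emph{for every ground basic concept $B$ built using only annotation names from $D$, $B^{\Imc(\eta, D^\star)} = \eta(B)$; and for every ground role $P$ built on $D$, $(\delta,\delta') \in P^{\Imc(\eta, D^\star)}$ if and only if $\linearmap(\delta,\delta') \in \eta(P)$.} Once this claim is established, each of the four cases of the ground axiom $\alpha$ (concept assertion, role assertion, ground concept inclusion, ground role inclusion) follows by unfolding the respective semantic definitions: the ground inclusions reduce to subset relations between sets shown to be equal by the claim, and for an assertion $\syntaxpatom{A(a)}{S}$ the annotation set produced by the closed specifier $S$ in $\Imc(\eta, D^\star)$ coincides exactly with $F^\star_S$, so membership in $A^{\Imc(\eta, D^\star)}$ unpacks to $\eta(a) \in \eta(\syntaxpatom{A}{S})$; the role assertion case is analogous using $\linearmap(\eta(a),\eta(b))$.

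The base cases $\syntaxpatom{A}{S}$ and $\syntaxpatom{R}{S}$ for ground $S$ built on $D$ are the heart of the argument, and I would split on whether $S$ is closed or open. If $S$ is closed, then $S^{\Imc(\eta, D^\star)} = \{F^\star_S\}$, so $\delta \in (\syntaxpatom{A}{S})^{\Imc(\eta, D^\star)}$ iff there exists a ground specifier $S'$ built on $D$ with $\delta \in \eta(\syntaxpatom{A}{S'})$ and $F^\star_{S'} = F^\star_S$; because $\star \notin D$, this forces $S'$ to be closed and to carry the same attribute-value pairs as $S$, and hence $\eta(\syntaxpatom{A}{S'}) = \eta(\syntaxpatom{A}{S})$. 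If $S$ is open, $S^{\Imc(\eta, D^\star)}$ is the set of all finite supersets of the attribute-value set $G$ of $S$, and the condition $F^\star_{S'} \supseteq G$ is precisely the condition $S' \implies S$; the monotonicity clause in Definition~\ref{def:sat-assertions} then yields $\eta(\syntaxpatom{A}{S'}) \subseteq \eta(\syntaxpatom{A}{S})$, while the reverse inclusion is witnessed by taking $S' = S$. The case $\syntaxpatom{R}{S}$ is handled in the same way with $\linearmap(\delta,\delta')$ in place of $\delta$.

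The remaining inductive cases ($\syntaxpatom{R^-}{S}$, $\neg P$, $\exists P$, $\bigsqcap B_i$, $\bot$) are routine since the geometric operators in Definition~\ref{def:sat-assertions} mirror the standard DL semantics; one observation used repeatedly is that conditions \textit{(i)}--\textit{(iii)} force $\linearmap$ to be a bijection $\mathbb{R}^m \times \mathbb{R}^m \to \mathbb{R}^{2 m}$ (its image contains two complementary $m$-dimensional subspaces whose sum is all of $\mathbb{R}^{2m}$), so that role inclusions interpreted set-theoretically in $\Imc(\eta, D^\star)$ correspond faithfully to geometric inclusions between regions in $\eta$, in particular for $\neg P$. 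The hard part is the base case, where one must verify carefully that the marker $\star$ genuinely distinguishes open from closed specifiers: without the hypothesis $\star \notin D$, a closed specifier mentioning $\star$ could be confused with the encoding of an open specifier, breaking the correspondence between $F^\star_{S'}$ and $S'$ on which the argument rests.
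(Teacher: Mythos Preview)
Your proposal is correct and rests on the same two ingredients the paper uses: the encoding $F^\star_S$ (with $\star$ marking open specifiers) and the monotonicity clause $S\Rightarrow T \Rightarrow \eta(\syntaxpatom{E}{S})\subseteq\eta(\syntaxpatom{E}{T})$ from Definition~\ref{def:sat-assertions}. The difference is organizational: you isolate an explicit semantic lemma, namely $(\syntaxpatom{A}{S})^{\Imc(\eta,D^\star)}=\eta(\syntaxpatom{A}{S})$ and $(\delta,\delta')\in(\syntaxpatom{R}{S})^{\Imc(\eta,D^\star)}\Leftrightarrow \linearmap(\delta,\delta')\in\eta(\syntaxpatom{R}{S})$, and then read off all four axiom cases from it; the paper instead re-establishes the two inclusions between $\eta(\cdot)$ and $(\cdot)^{\Imc(\eta,D^\star)}$ inline, separately for each direction of each axiom type. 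Your decomposition is cleaner and avoids repetition. One point you make explicit that the paper leaves tacit is the need for $\linearmap$ to be surjective (hence bijective, by conditions \textit{(i)}--\textit{(iii)}) when transferring role inclusions from $\Imc(\eta,D^\star)$ back to $\eta$: without surjectivity, an element of $\eta(P)$ outside the image of $\linearmap$ would not be constrained by $\Imc(\eta,D^\star)\models P\sqsubseteq Q$. The paper's proof tacitly relies on this (it only ever picks $\delta_1,\delta_2$ with $\linearmap(\delta_1,\delta_2)\in\eta(P)$ and concludes $\eta(P)\subseteq\eta(Q)$), and bijectivity is established separately in its Proposition~\ref{prop:bijective}; you are right to flag it as part of the argument here.
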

\begin{proof}[Sketch]
The proof relies heavily on the definition of $F^\star_S$ and the requirement that $\eta(\syntaxpatom{E}{S})\subseteq \eta(\syntaxpatom{E}{T})$ when $S\implies T$ in Definition \ref{def:sat-assertions}.%the definition of a geometric interpretation.
\end{proof}

We are now ready to define geometric models of \dllitea ontologies.
\begin{definition}[Geometric Model]
Let \Omc be a \dllitea ontology, and let $\NIkb{\Omc}$ be the set of annotation names from $\dlNames{a}$ that occur in $\Omc$ and $\star$ an annotation name that does not occur in \Omc.  An $m$-dimensional $\linearmap$-geometric interpretation $\eta$ is a model of \Omc 
if $\Imc(\eta, \NIkbstar{\Omc})$ is a model of \Omc. 
\end{definition}

\section{Satisfiability and Convex Geometric Models}\label{sec:satisfiability}

We start by recalling some definitions and 
results  on geometric interpretations of an ontology 
containing existential rules~\cite{DBLP:conf/kr/Gutierrez-Basulto18}. 
An existential rule is an expression of the form $B_1\wedge\dots\wedge B_n\rightarrow \exists X_1,\dots, X_j. H$ where $n\geq 0$, the $B_i$'s and $H$ are atoms built from sets of predicates, constants and variables, and the $X_i$'s are variables. A negative constraint is a rule whose head is $\bot$. An existential rule or negative constraint is \emph{quasi-chained} if for all $1\leq i\leq n$, $|(\mn{vars}(B_1)\cup\dots\cup\mn{vars}(B_{i-1}))\cap \mn{vars}(B_i)|\leq 1$, where $\mn{vars}(B)$ denotes the variables that occur in $B$. 
It is easy to see that a \dlliterhorn ontology without negative role inclusions can be translated into a  quasi-chained ontology. Negative role inclusions are not quasi-chained: their translation to rules is indeed of the form $P_1(x,y)\wedge P_2(x,y)\rightarrow \bot$ where the body atoms share two variables. 
A (standard) model $\Mmc$ of an existential rules ontology $\Kmc$ is a set of facts that contains all facts from $\Kmc$ and satisfies all existential rules from $\Kmc$. In this setting, for every fact $\alpha$, $\alpha\in \Mmc$ iff $\Mmc\models\alpha$. 

Given a set $\Rmc$ of relation names and a set $X$ of constants and labelled nulls, a $m$-dimensional geometric interpretation $\eta$ of $(\Rmc,X)$ assigns to each $k$-ary relation $R$ from $\Rmc$ a region $\eta(R)\subseteq \mathbb{R}^{k.m}$ and to each object $o$ from $X$ a vector $\eta(o)\in \mathbb{R}^m$. 
Tuples of individuals are interpreted using vectors concatenation, which plays the role of the linear map $\linearmap$ 
we use to interpret a pair of individuals:  
for every $R\in\Rmc$ and $o_1,\dots,o_k\in X$, $\eta\models R(o_1,\dots, o_k)$ if $\eta(o_1)\oplus \dots\oplus\eta(o_k)\in\eta(R)$. 
The authors define 
\[\Phi(\eta)=\{R(o_1,\dots,o_k)\mid R\in \Rmc, o_1,\dots, o_k\in X, \eta\models R(o_1,\dots, o_k)\}.\]

Proposition 3 in \cite{DBLP:conf/kr/Gutierrez-Basulto18} states that if $\Kmc$ is a quasi-chained ontology and $\Mmc$ is a finite model of $\Kmc$, then $\Kmc$ has a convex geometric model $\eta$ such that $\Phi(\eta)=\Mmc$. 
This transfers to the DL setting as follows. Let \Omc be a quasi-chained DL ontology   and $\Jmc$ be a finite 
model of $\Omc$ such that $\Delta^{\Jmc}=\dlNames{i}$, and $a^{\Jmc}=a$ for every $a\in\dlNames{i}$  
(which implies that for every concept $A$, if $\delta\in A^{\Jmc}$, then $\Jmc\models A(\delta)$, and similarly for roles).
Then \Omc has a convex geometric model $\eta$ such that 
$$\{E(\vec{t})\mid \eta\models E(\vec{t})\}=\{E(\vec{t})\mid \Jmc\models E(\vec{t})\},$$ 
where $E\in\NC\cup\NR$
and $\vec{t}$ is a tuple with the arity of $E$. In the following, we may similarly 
write $\syntaxpatom{E(\vec{t})}{S}$ to refer to an atom of   the form $\syntaxpatom{A(a)}{S}$
or $\syntaxpatom{R(a,b)}{S}$.

\begin{restatable}{theorem}{dlliteaconvexmodel}\label{dlliteaconvexmodel}
Let \Omc be a satisfiable \dllitea ontology 
without negative role inclusions. 
\Omc has a convex geometric model. 
\end{restatable}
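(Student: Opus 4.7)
The plan is to reduce the theorem to Proposition~3 of \cite{DBLP:conf/kr/Gutierrez-Basulto18} by viewing the attributed DL as an existential rules formalism over an enlarged predicate vocabulary, constructing a finite model from which that proposition yields a convex geometric interpretation, and then translating the result back into the attributed setting via Theorem~\ref{lem:fundam}.

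First, exploiting satisfiability of $\Omc$, I would obtain a finite model $\Jmc$ of $\Omc$ whose individual domain is exactly $\dlNames{i}$ with $a^{\Jmc}=a$ for all $a\in \dlNames{i}$. Since \dllitea has no number restrictions, a chase-style construction, restricted to the finitely many ground specifiers built from annotation names occurring in $\Omc$ and followed by a quotient identifying witnesses of the same type, produces such a $\Jmc$. Next, I would view each ground attributed atom $\syntaxpatom{A}{S}$ (resp.\ $\syntaxpatom{R}{S}$) built over annotation names of $\Omc$ as a fresh unary (resp.\ binary) predicate, and collect into a rule set $\Kmc$ the existential-rule translations of: (i) the ground assertions of $\Omc$; (ii) every grounding of the non-ground concept/role inclusions of $\Omc$ over annotation names of $\Omc$; and (iii) the monotonicity rules $\syntaxpatom{E}{S}\sqsubseteq \syntaxpatom{E}{T}$ whenever $S\implies T$ and $E\in\NC\cup\NR$. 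Because negative role inclusions are excluded and the body conjuncts of a \dlliterhorn axiom form a chain — each new $B_i$ either shares only the principal variable $x$ with previous atoms or extends through a single role atom $P(x,y)$ followed by unary atoms in $y$ — every rule of $\Kmc$ is quasi-chained, and $\Jmc$ flattens to a finite model of $\Kmc$.

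I would then apply Proposition~3 of \cite{DBLP:conf/kr/Gutierrez-Basulto18} to $(\Kmc,\Jmc)$ to obtain a convex geometric model $\eta_0$ (in the existential-rules sense, with vector concatenation as the pairing) such that $\Phi(\eta_0)$ coincides with the fact set of $\Jmc$. Taking $\linearmap$ to be vector concatenation — which clearly satisfies (i)--(iii) — I would define an $\linearmap$-geometric interpretation $\eta$ in the sense of Definition~\ref{def:sat-assertions} by $\eta(a):=\eta_0(a)$ and $\eta(\syntaxpatom{E}{S}):=\eta_0(\syntaxpatom{E}{S})$ for every $E\in\NC\cup\NR$ and every ground specifier $S$. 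The monotonicity rules of $\Kmc$ encode exactly the definitional requirement that $S\implies T$ implies $\eta(\syntaxpatom{E}{S})\subseteq \eta(\syntaxpatom{E}{T})$, and convexity of each region transfers from $\eta_0$ by construction. To finish, I would use Theorem~\ref{lem:fundam} to lift ground satisfaction from $\eta$ to $\Imc(\eta,\NIkbstar{\Omc})$, and then argue that every variable assignment relevant to a non-ground inclusion of $\Omc$ reduces, via the safety condition on object variables and the handling of the fresh annotation name $\star$, to a grounding already contained in $\Kmc$.

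\textbf{Main obstacle.} The delicate point is the last step. Variable assignments in $\Imc(\eta,\NIkbstar{\Omc})$ may send a set variable to an arbitrary finite binary relation over $\dlNames{a}$, not only to those induced by ground specifiers over names of $\Omc$. One has to check that, thanks to the use of $\star$ as a representative for values outside $\NIkbstar{\Omc}$ and to the monotonicity rules, every such assignment satisfying the left-hand side of some axiom of $\Omc$ corresponds to an instance already covered by a ground axiom in $\Kmc$, so that the right-hand side is enforced. A secondary subtle step is obtaining the finite model $\Jmc$ with identity naming for attributed \dlliterhorn, which requires a careful chase termination or filtration argument tailored to the attributed setting.
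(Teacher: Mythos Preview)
Your proposal follows essentially the same route as the paper: ground the ontology over $\NIkb{\Omc}$, replace each annotated concept/role $\syntaxpatom{E}{S}$ by a fresh predicate $E_S$, add the monotonicity inclusions $E_S\sqsubseteq E_T$ for $S\Rightarrow T$, observe that the resulting \dlliterhorn ontology is quasi-chained (thanks to the absence of negative role inclusions), invoke Proposition~3 of \cite{DBLP:conf/kr/Gutierrez-Basulto18} on a finite model, and pull the convex regions back to the attributed setting.

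Two small points are worth noting. First, you build a finite model of the \emph{attributed} ontology and then flatten, flagging the finite-model construction for attributed \dlliterhorn as a secondary obstacle. The paper reverses the order: it first translates to plain \dlliterhorn (grounding plus the $E_S$ renaming), and only then invokes the standard finite model property of \dlliterhorn. This sidesteps your secondary obstacle entirely, so you may prefer that ordering. Second, your ``main obstacle'' about variable assignments ranging over arbitrary finite relations on $\dlNames{a}$ is less problematic than you suggest: in $\Imc(\eta,\NIkbstar{\Omc})$ the only annotation sets that ever occur in $A^{\Imc}$ or $R^{\Imc}$ are of the form $F^\star_S$ with $S$ built on $\NIkb{\Omc}$, so any assignment $\Zuweisung$ whose value $\Zuweisung(X)$ is not of this form makes the left-hand side of the inclusion empty and the axiom is satisfied vacuously. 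Hence the only assignments that need checking are exactly those corresponding to groundings already in your rule set $\Kmc$. The paper tacitly uses this observation when it restricts attention to assignments over $\NIkb{\Omc}$.
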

\begin{proof}[Sketch]
We use grounding to translate \Omc into an equisatisfiable \dlliterhorn ontology. 
Since $\Omc$ does not contain negative role inclusion, the obtained \dlliterhorn ontology is quasi-chained. 
We can thus apply the result by Gutiérrez-Basulto and Schockaert~\cite{DBLP:conf/kr/Gutierrez-Basulto18} to get a convex $m$-dimensional $\linearmap$-geometric model with $\linearmap$ being vector concatenation. This geometric model is used to construct a geometric model of \Omc.
\end{proof}

\section{Adding Time}\label{sec:addingtime}

In this section, we discuss the ability of convex geometric models to capture temporally attributed DLs. 
We show that we need to restrict the expressivity of the temporal ontology to get a convex geometric model. 

We introduce temporally attributed DLs by defining temporally attributed \dlliterhorn, called \dlliteat, 
as in \cite{DBLP:conf/birthday/OzakiKR19}. 
The description logic \dlliteat is defined as a multi-sorted version of \dllitea,
where time points and intervals are seen as  datatypes.
 \emph{Time points} are elements of \NTimepoints, and \emph{time intervals} are elements of 
 \NIntervals. 
 These sets and the set of \emph{(abstract) individual names} \NI are  mutually disjoint.
Time points are represented in a discrete manner by natural numbers, and we assume that
elements of $\NTimepoints$ ($\NIntervals$) are (pairs of) numbers. 
A pair of numbers $k,\ell$ in $\NIntervals$ is denoted $[k,\ell]$. 

The  annotation 
names: $\dt,\bef,\aft,\unt,\since,\during,\between\in \dlNames{a}$  are called \emph{temporal attributes} and have their own semantics. 
Basically, \dt is used to mark a point in time and \bef and \aft 
refer to \emph{some} point in the past and in the future, respectively. The temporal attributes \unt and \since
refer to \emph{all} points in the past and all point in the future (e.g. \since $2020$ the KR conference became an annual event). 
Finally, \during is an interval which represents
a period of time (it refers to all points in the interval) and  \between is an interval 
of uncertainty for when an event happened.
The \emph{value type} of $\dt,\bef,\aft,\unt,\since$ is \NTimepoints, while 
the value type of $\during,\between$ is \NIntervals. 
We write ${\sf valtype}(a)$ to refer to the value type of 
the annotation name $a$. Object variables  are now taken from pairwise disjoint sets ${\sf Var}(\dlNames{a})$,
${\sf Var}(\NTimepoints)$, and ${\sf Var}(\NIntervals)$.

Annotation set specifiers are defined as in Section~\ref{sec:attributedDLs} with 
the difference that for each
$a_i\in\dlNames{a}$ and each $v_i$ in attribute value pair $a_i:v_i$ 
we require compatibility between the value type of its attribute, that is:
\begin{itemize}
\item $v_i\in{\sf valtype}(a_i)\cup{\sf Var}({\sf valtype}(a_i)) $, 
or  
\item $v_i=[\valueleft,\valueright]$ with ${\sf valtype}(a_i)=\NIntervals$
and $\valueleft,\valueright$    in $\NTimepoints\cup{\sf Var}(\NTimepoints)$, or 
\item
$v_i=X.b$ with $X\in\NU$, $b\in\dlNames{a}$, and ${\sf valtype}(a_i)={\sf valtype}(b)$. 
\end{itemize}

A \emph{time-sorted interpretation} $\Inter=\tuple{\Delta^\Inter_{\sf i},\Delta^\Inter_{\sf a},\cdot^\Inter}$ 
is an interpretation with a domain $\Delta^\Inter_{\sf a}$ that is a disjoint union of 
$\Delta^\Inter_A\cup\Delta^\Inter_T\cup\Delta^\Inter_{2T}$, where $\Delta^\Inter_A$ 
is the \emph{abstract domain of annotations},
$\Delta^\Inter_T$ (the \emph{temporal domain})  is 
a finite 
or infinite interval, 
and $\Delta^\Inter_{2T}=\Delta^\Inter_T\times\Delta^\Inter_T$.
We interpret individual names in $\dlNames{i}$ as elements in $\Delta^\Inter_{\sf i}$;
annotation names in $\dlNames{a}$  as elements in $\Delta^\Inter_A$; 
time points $t\in\NTimepoints$ as $t^\Inter\in\Delta^\Inter_T$;
and intervals $[t,t']\in\NIntervals$ as $[t,t']^\Imc=(t^\Inter,t'^\Inter)\in\Delta^\Inter_{2T}$.
A pair $\tuple{\delta,\epsilon}\in\Delta^\Inter_{A}\times\Delta^\Inter_{\sf a}$ is \emph{well-typed},
if: 
\begin{enumerate}%[(a)]
\addtolength\itemsep{2mm} 
\item $\delta=a^\Inter$ for an attribute `$a$' of value type $\NTimepoints$
and $\epsilon\in\Delta^\Inter_T$; or
\item $\delta=a^\Inter$ for an attribute `$a$' of value type $\NIntervals$ and $\epsilon\in\Delta^\Inter_{2T}$; or
\item $\delta=a^\Inter$ for an attribute `$a$' of value type $\dlNames{a}$ and $\epsilon\in\Delta^\Inter_A$.
\end{enumerate}
Let $\asdom{\Inter}$ be the set of all finite sets of well-typed pairs.
The function $\cdot^\Inter$ maps concept names $A\in\NC$ to $A^\Inter\subseteq\Delta^\Inter_{\sf i}\times\asdom{\Inter}$ and role names
$R\in\NR$ to $R^\Inter\subseteq\Delta^\Inter_{\sf i}\times\Delta^\Inter_{\sf i}\times\asdom{\Inter}$.
The semantics of terms is given by variable assignments, which
 for a time-sorted interpretation $\Inter$ is defined as a function $\Zuweisung$ that maps
\begin{itemize}
\item set variables $X\in \NU$ to finite binary relations $\Zuweisung(X)\in\asdom{\Inter}$, and
\item object variables $x \in \variables{\NI}\cup\variables{\NTimepoints}\cup\variables{\NIntervals}$
to elements $\Zuweisung(x)\in\Delta^\Inter_I\cup\Delta^\Inter_T\cup\Delta^\Inter_{2T}$ (respecting their types).
\end{itemize}
For (set or object) variables $x$, we define $x^{\Inter,\Zuweisung}\defeq\Zuweisung(x)$, and
for abstract individuals, time points, or time intervals $a$, we define $a^{\Inter,\Zuweisung}\defeq a^\Inter$.
The semantics of specifiers is as in Section~\ref{sec:attributedDLs}
with the difference that values can also be time points and intervals: 
%In particular,
\begin{itemize}
\item $\closedAnnotation{a\syntaxequality v}^{\Inter,\Zuweisung}\defeq\{\{\tuple{a^\Inter, v^{\Inter,\Zuweisung}}\}\}$,
with $v\in\valtype(a)\cup\variables{\valtype(a)}$;
\item   
 $\closedAnnotation{a\syntaxequality
  [\valueleft,\valueright]}^{\Inter,\Zuweisung}\defeq\{\{\tuple{a^\Inter, 
 (\valueleft^{\Inter,\Zuweisung}, \valueright^{\Inter,\Zuweisung})} \}\}$, 
with $\valtype(a)=\NIntervals$, 
 and  
 $\valueleft,\valueright\in\NTimepoints\cup\variables{\NTimepoints}$.
\end{itemize}

We are now ready to formally define the semantics of temporal attributes.

\begin{definition}\label{def_localtoglobal}
Consider a temporal domain $\Delta^\Inter_T$ and a domain $\Delta^\Inter_{\ind}$ of individuals
and a domain $\Delta^\Inter_{\ann}$ of annotations,
and let $(\Inter_i)_{i\in\Delta^\Inter_T}$ be a sequence of (non-temporal) interpretations
with domains $\Delta^\Inter_{\ind}$ and $\Delta^\Inter_{\ann}$,
such that, for all $a\in\NI$, we have $a^{\Inter_i}=a^{\Inter_j}$ for all $i,j\in\Delta^\Inter_T$.
We define a \emph{global interpretation} for $(\Inter_i)_{i\in\Delta^\Inter_T}$
as a time-sorted interpretation $\Inter=\tuple{\Delta^\Inter_{\ind},\Delta^\Inter_{\ann},\cdot^\Inter}$ as follows. 
Let $a^\Inter=a^{\Inter_i}$ for all $a\in\NI$.
For any finite set $F\in\asdom{\Inter}$,
let $F_I\defeq F\cap(\Delta^\Inter_{A}\times\Delta^\Inter_{A})$ denote its abstract part
without any temporal attributes.
For any $A\in\NC$, $\delta\in\Delta^\Inter_{\ind}$, and $F\in\asdom{\Inter}$ with 
$F\setminus F_I\neq \emptyset$, we have
$\tuple{\delta,F}\in A^\Inter$ if and only if 
$\tuple{\delta,F_I}\in A^{\Inter_i}$ for some $i\in\Delta^\Inter_T$, %\footnote{`for some $i\in\Delta^\Inter_T$' 
%is useful for attributes which universally  quantify time points    
% (e.g., $\unt$). }
and the following conditions hold
for all $\tuple{a^\Inter,x}\in F$:
\begin{itemize}
\item if $a=\dt$, then $\tuple{\delta,F_I}\in A^{\Inter_x}$,
\item if $a=\bef$, then $\tuple{\delta,F_I}\in A^{\Inter_j}$ for some $j<x$,
\item if $a=\aft$, then $\tuple{\delta,F_I}\in A^{\Inter_j}$ for some $j>x$,
\item if $a=\unt$, then $\tuple{\delta,F_I}\in A^{\Inter_j}$ for all $j\leq x$,
\item if $a=\since$, then $\tuple{\delta,F_I}\in A^{\Inter_j}$ for all $j\geq x$,
\item if $a=\between$, then $\tuple{\delta,F_I}\in A^{\Inter_j}$ for some $j\in [x]$,
\item if $a=\during$, then $\tuple{\delta,F_I}\in A^{\Inter_j}$ for all $j\in [x]$,
\end{itemize}
where $[x]$ for an element $x\in\Delta^\Inter_{2T}$ denotes the finite
interval represented by the pair of numbers $x$, and $j\in\Delta^\Inter_T$.
For roles $R\in\NR$, we define $\tuple{\delta,\epsilon,F}\in R^\Inter$ analogously.
\end{definition}

\begin{definition}[Temporal Geometric Interpretation]
\sloppy{A temporal $m$-dimensional $\linearmap$-geometric interpretation 
with temporal domain $\Delta_T$
is a sequence %$\etasequence = 
$(\eta_j)_{j\in \Delta_T}$ 
of $m$-dimensional $\linearmap$-geometric interpretations. }
An $m$-dimensional $\linearmap$-geometric interpretation $\eta$ 
is \emph{global} for $(\eta_j)_{j\in \Delta_T}$ and $D_\ann\subseteq \dlNames{a}$ 
if $\Imc(\eta, (D_\ann\cup\NTimepoints\cup\NIntervals)^\star)$ is global for $(\Imc(\eta_j,D^\star_\ann))_{j\in \Delta_T}$.
\end{definition}

Let $\NIkb{\Omc}$ denote the union of all elements in $\dlNames{a}$, \NTimepoints, and \NIntervals occurring in \Omc.

\begin{definition}[Geometric Model]
Let \Omc be a \dlliteat ontology. 
An $\linearmap$-geometric interpretation $\eta$ 
is   an $m$-dimensional $\linearmap$-geometric model of \Omc if it is global for
a sequence  
$(\eta_j)_{j\in \Delta_T}$ 
of $m$-dimensional $\linearmap$-geometric interpretations and $\NIkb{\Omc}$, plus 
$\Imc(\eta, \NIkbstar{\Omc})$ satisfies~\Omc. 
\end{definition}

Example~\ref{ex:counterexample} shows that even if temporal specifiers are only of the form $\dt$ and $\during$, convex geometric models may not exist for  satisfiable \dlliteat ontologies.
\begin{example}\label{ex:counterexample}
Let 
\begin{align*}\Omc=&\{\exists R@\closedAnnotation{\during\syntaxequality [1,2]}\sqsubseteq A, 
\exists R@\closedAnnotation{\dt\syntaxequality 1} \sqcap A\sqsubseteq \bot, \\&R(a,a)@\closedAnnotation{\dt\syntaxequality 1},
 R(b,b)@\closedAnnotation{\dt\syntaxequality 1}, R(a,b)@\closedAnnotation{\dt\syntaxequality 2}, R(b,a)@\closedAnnotation{\dt\syntaxequality 2}\}
\end{align*}
 and let $\eta$ be a convex $f$-geometric model of $\Omc$. 
Let $\delta=0.5\eta(a)+0.5\eta(b)$. By the convexity of $\eta(R@\closedAnnotation{\dt\syntaxequality 1})$ and $\eta(R@\closedAnnotation{\dt\syntaxequality 2})$, 
we have that 
\begin{align*}
f(\delta,\delta)\in \eta(R@\closedAnnotation{\dt\syntaxequality 1})\text{ and }
f(\delta,\delta)\in \eta(R@\closedAnnotation{\dt\syntaxequality 2}),
\end{align*}
 so $\Imc(\eta, \NIkbstar{\Omc})\models R(\delta,\delta)@\closedAnnotation{\dt\syntaxequality 1}$ and $\Imc(\eta, \NIkbstar{\Omc})\models R(\delta,\delta)@\closedAnnotation{\dt\syntaxequality 2}$. 
It follows that $\Imc(\eta, \NIkbstar{\Omc})\models R(\delta,\delta)@\closedAnnotation{\during\syntaxequality [1,2]}$. 
Since $\Imc(\eta, \NIkbstar{\Omc})\models R@\closedAnnotation{\during\syntaxequality [1,2]}\sqsubseteq A$, we also have that $\Imc(\eta, \NIkbstar{\Omc})\models A(\delta)$. 
Hence $\Imc(\eta, \NIkbstar{\Omc})\not\models \exists R@\closedAnnotation{\dt\syntaxequality 1}\sqcap A\sqsubseteq \bot$. This means that
 $\Imc(\eta, \NIkbstar{\Omc})$ is not a model of $\Omc$. \hfill {\mbox{$\triangleleft$}} 
\end{example}

To overcome this problem, we introduce a restriction on the specifiers allowed on roles. 
We introduce \emph{atemporal specifiers}. An atemporal specifier is a specifier $S$ that can only be interpreted as a set $S^{\Inter,\Zuweisung}\subseteq\asdom{\Inter}$ of matching annotation sets that do not contain any temporal attribute.

To show that convex geometric models can capture some \dlliteat ontologies, we   use   concept inclusions with conjunctions on the left-hand side, which can be expressed in \dlliterhorn. 
The following example shows that adding conjunction in role inclusions may however lead to satisfiable ontologies not having a convex model, even for plain DLs. 
\begin{example}
Assume role conjunctions are allowed in the ontology. 
Let 
%$
\begin{equation*}
\begin{array}{l}
\Omc=\{R_1\sqcap R_2\sqsubseteq R_3, \exists R_1\sqsubseteq A, \exists R_3 \sqcap A\sqsubseteq \bot, R_1(a,a), R_1(b,b), R_2(a,b), R_2(b,a)\}
%$
 \end{array}
\end{equation*}  
and let $\eta$ be a convex $f$-geometric model of $\Omc$. 
Let $\delta=0.5\eta(a)+0.5\eta(b)$. By the convexity of $\eta(R_1)$ and $\eta(R_2)$, we have that 
\begin{align*}
f(\delta,\delta)\in \eta(R_1)\text{ and } 
f(\delta,\delta)\in \eta(R_2),\text{ hence }f(\delta,\delta)\in \eta(R_1\sqcap R_2).
\end{align*}
Then since $\eta$ is a model of $R_1\sqcap R_2\sqsubseteq R_3$, $f(\delta,\delta)\in \eta(R_3)$ so $\delta\in \eta(\exists R_3)$. Moreover, since $\eta$ is a model of $\exists R_1\sqsubseteq A$, and $\delta\in \eta(\exists R_1)$, $\delta\in \eta(A)$. 
Hence $\eta\not\models \exists R_3\sqcap A\sqsubseteq\bot$ so $\eta$ is not a model of $\Omc$. 
\hfill {\mbox{$\triangleleft$}}
\end{example}

We now state the main result of this section, which 
states that, under certain conditions, satisfiable
\dlliteat ontologies
have a convex geometric model.
The need of Condition (i) is already illustrated by Example~\ref{ex:counterexample}
whereas Condition (ii) ensures that
the underlying logic is Horn (that is, it does not have disjunctions which 
can be expressed with the temporal attributes \bef, \aft and \between). % fragment.

\begin{restatable}{theorem}{temporalgeometric}\label{prop:addingtime}
Let \Omc be a satisfiable \dlliteat ontology without negative role inclusions and such that (i) all specifiers attached to a role 
in \Omc are atemporal, and (ii) \bef, \aft and \between do not occur in \Omc.  
Then \Omc has a convex geometric model. 
\end{restatable}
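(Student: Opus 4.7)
Our plan is to reduce to Theorem~\ref{dlliteaconvexmodel} via a grounding-and-unfolding construction. Let $T^\ast\subseteq\natnum$ be a finite set containing every time point occurring in \Omc together with all time points covered by intervals occurring in \Omc; we restrict interpretations to have temporal domain $T^\ast$. Grounding all safe variable occurrences yields a finite equisatisfiable ground ontology $\Omc_g$.

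Since by condition~(ii) only the universal temporal attributes $\dt,\unt,\since,\during$ appear, we \emph{unfold} them to $\dt$: in every axiom of $\Omc_g$, replace each temporal pair $\unt\syntaxequality t$ (respectively $\since\syntaxequality t$, $\during\syntaxequality[t_1,t_2]$) by the set of pairs $\{\dt\syntaxequality t'\mid t'\in T^\ast,\ t'\leq t\}$ (respectively $t'\geq t$; $t_1\leq t'\leq t_2$). An occurrence on the left-hand side of a concept inclusion or inside an assertion yields a conjunction of basic concepts (or a single ground specifier with several $\dt$-pairs), and an occurrence on the right-hand side of an inclusion causes the axiom to be split into one axiom per $t'$. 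Condition~(i) ensures that role expressions and role inclusions are untouched. The result is a ground \dllitea ontology $\Omc'$ in \dlliterhorn without negative role inclusions, in which $\dt$ is treated as an ordinary annotation name; $\Omc'$ is equisatisfiable with \Omc on the temporal domain $T^\ast$, thanks to the universal semantics of the unfolded attributes.

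Applying Theorem~\ref{dlliteaconvexmodel} to $\Omc'$ yields a convex $m$-dimensional $\linearmap$-geometric model~$\eta_0$ of $\Omc'$. We extend $\eta_0$ to a geometric interpretation~$\eta$ defined on every ground specifier built from $\NIkb{\Omc}\cup\NTimepoints\cup\NIntervals$: for any ground specifier $S$ containing a universal temporal pair such as $\unt\syntaxequality t$ (analogously for $\since,\during$), writing $S_{t'}$ for $S$ with that pair replaced by $\dt\syntaxequality t'$, define
\[\eta(\syntaxpatom{A}{S})\defeq\bigcap_{t'\in T^\ast,\ t'\leq t}\eta(\syntaxpatom{A}{S_{t'}}),\]
and iterate for each universal temporal pair in $S$. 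Roles stay atemporal by condition~(i), so $\eta$ agrees with $\eta_0$ on role regions, and $\eta(a)\defeq\eta_0(a)$ for individuals. Since intersections of convex sets are convex, $\eta$ remains convex, and the monotonicity condition of Definition~\ref{def:sat-assertions} is preserved because $\eta_0$ satisfies it and the intersections respect the $\implies$-ordering between specifiers. Finally, define the time-indexed sequence $(\eta_j)_{j\in T^\ast}$ by $\eta_j(\syntaxpatom{A}{S})\defeq\eta(\syntaxpatom{A}{S\cup\{\dt\syntaxequality j\}})$ for atemporal $S$, $\eta_j(\syntaxpatom{R}{S})\defeq\eta(\syntaxpatom{R}{S})$, and $\eta_j(a)\defeq\eta(a)$.

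The main obstacle is verifying that $\eta$ is indeed global for $(\eta_j)_{j\in T^\ast}$ in the sense of Definition~\ref{def_localtoglobal} and that $\Imc(\eta,\NIkbstar{\Omc})$ satisfies \Omc. This amounts to matching the set-theoretic semantics of $\dt,\unt,\since,\during$ given in Definition~\ref{def_localtoglobal} with the conjunctions and intersections produced by the unfolding, applying Theorem~\ref{lem:fundam} on each ground axiom of \Omc to transfer satisfaction between $\eta$ and the derived interpretation $\Imc(\eta,\NIkbstar{\Omc})$, and using the monotonicity property of Definition~\ref{def:sat-assertions} to reconcile specifiers of different granularity.
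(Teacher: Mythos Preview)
Your proposal is correct and follows essentially the same strategy as the paper: ground the ontology, exploit condition~(ii) to unfold the universal temporal attributes into conjunctions of point-wise conditions, apply Theorem~\ref{dlliteaconvexmodel} to the resulting \dllitea ontology, and build the final interpretation via intersections (which preserve convexity). The only cosmetic differences are that the paper unfolds to singleton $\during$-intervals rather than $\dt$, adds equivalences of the form $\syntaxpatom{A}{S}\equiv\bigsqcap_{(a:b)\in S_\ta}(\syntaxpatom{A}{S(a:b)})^\sharp$ instead of splitting right-hand sides, and keeps atomic $\since\syntaxequality k_{\sf max}$ and $\unt\syntaxequality k_{\sf min}$ at the boundary rather than restricting the temporal domain outright; none of these changes the substance of the argument.
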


\section{Related Work}\label{sec:relatedwork}
 
 Traditionally, most KG embedding models are time-unaware. 
 These models embed both entities and relations in a low-dimensional 
 latent space based on some regularities of target KG. 
 They can be used as   approximate reasoning methods~\cite{PaTh07,PRZ2016} 
 for completing KG without using the schema. Typical 
 embedding models include the translation based models, such as TransE~\cite{bordes2013translating} and 
 bilinear models, such as 
 ComplEx~\cite{TrouillonDGWRB17}  and SimplE~\cite{Kazemi018}. 
  From the expressiveness perspective, TransE and DisMult have been shown to be not fully expressive; however, CompleEx and 
  SimplE are fully expressive. Gutierrez-Basulto and Schockaert~\cite{DBLP:conf/kr/Gutierrez-Basulto18}
 use geometric models to study the compatibility between TBox/ontology and KG embeddings. They show that bilinear models 
 (inc. ComplEx and SimplE) cannot strictly represent relation subsumption rules. Wiharja et al.~\cite{WPKD2020} show 
 that many well known KG embeddings based on KG completion methods are not impressive, when schema aware correctness is 
 considered, despite good performance reported in silver standard based evaluations.  
Currently, more and more applications are involving dynamic KG, where 
knowledge in practice is time-variant and consists of sequences of observations. For example, in  
  recommendation systems based on KG, new items and new user actions appear in real time. 
  Accordingly, temporal KG embedding models incorporate time information into their node and relation representations. 
We next discuss how temporal 
information is taken into account in KG embeddings and how 
it has been used in combination with classical DLs.

\mypar{Temporal Knowledge Graph Embeddings.} 
Temporal KG embedding models can be seen as extensions of static KG embedding models. 
A basic approach is to collapse the dynamic graph into a static graph by aggregating the temporal observations over time~\cite{LiKl2007}. 
 This  approach, however,  may lose large amounts of information. An alternative approach is to  
  give more weights to snapshots that are more recent~\cite{Sharan08temporal-relationalclassifiers}. 
Another alternative approach to aggregating temporal observations is to apply decomposition methods to dynamic graphs. 
The idea is to model a KG as an order 4 tensor and decompose it using CP or Tucker, or other decomposition methods to obtain entity, relation, and timestamp embeddings~\cite{ETYBK2016}.
In addition to aggregation based approaches, there are approaches extending static KG embedding, such as TransE, 
   by adding a timestamp embedding into the score function~\cite{jiang-etal-2016-towards,MTD2018}.  
   Jiang et  al.~\cite{jiang-etal-2016-towards} only use such timestamps to maintain temporal order, 
   while using Integer Linear Programming  to encode the temporal consistency information as constraints.  
   Ma et al.~\cite{MTD2018} extend several models (Tucker, RESCAL, HolE, ComplEx, DistMult) by adding a timestamp embedding to their score functions. These models may not work well when the number of timestamps is large. Furthermore, since they only learn embeddings for observed timestamps, they cannot generalize to unseen timestamps. 
Dasgupta et al.~\cite{dasgupta-etal-2018-hyte} fragments  a  temporally-scoped  input  KG 
 into multiple static subgraphs with each subgraph corresponding to a timestamp.   
There are also approaches of applying random walk models for temporal KG. 
E.g., Bian et al~\cite{BKDD2019} use metapath2vec to generate random walks on both the initial KG and the updated nodes and re-compute the embeddings for these nodes.  These approaches mainly leverage the temporal aspect of dynamic graphs to reduce the computations. However, they may fail at capturing the evolution and the temporal patterns of the nodes. 
Another  natural choice for modeling temporal KG is by extending sequence models to graph data. E.g., García-Duran~\cite{GDN2018} extend TransE and DistMult by combining the relation and timestamp through a character LSTM, so as to learn representations for time-augmented KG facts that can be used  in  conjunction  with  existing  scoring  functions for link prediction.
Ma et al.~\cite{MTD2018}  argue that temporal KG embeddings could also  be used as models   for cognitive episodic memory (facts we remember and can recollect) and for  semantic memory (current facts we know) can be generated from episodic memory by a marginalization operation.

\mypar{Temporal Description Logics.} 
 In the DL literature, there are several approaches for representing and reasoning  temporal 
 information~\cite{DBLP:conf/time/LutzWZ08,wolter1999temporalizing,DBLP:journals/tocl/ArtaleKRZ14}. 
  Schmiedel~\cite{Schmiedel1990}  was the first to propose an extension of the  description logics (the $\mathcal{FLENR^-}$ DL in this case) with an interval-based temporal logic, with the temporal quantifier  {\bf at}, the existential and universal temporal quantifiers {\bf sometime} and {\bf alltime}.  
 Artale and Franconi~\cite{ArFr1994,ArFr1998}  considered a class of interval-based temporal description logics  by reducing  the expressivity to keep the property of decidability of the logic proposed by Schmiedel~\cite{Schmiedel1990}. 
 Schild  proposed $\mathcal{ALCT}$~\cite{Schi1993}, extending  $\mathcal{ALC}$  with point-based modal temporal connectives from tense logic~\cite{Burgess1984}, including existential future ($\lozenge$), universal future ($\square$), next instant ($\bigcirc$),until ($\mathcal{U}$), reflexive until ({\bf U}). Wolter and Zakharyaschev   studied the $\mathcal{ALC_M^-}$ DL and showed that it is decidable in the class of linear, discrete and unbounded temporal structures~\cite{WoZa1998}. They also showed that the $\mathcal{ALC_M}$ DL (extending the $\mathcal{ALC_M^-}$ DL with global roles) is undecidable~\cite{WoZa1999}. 
 Temporal operators can be used in a temporal ABox as well, allowing the use of next instant ($\bigcirc\varphi$) 
 and until ($\varphi\mathcal{U}\psi$) with ABox assertions~\cite{AKLWZ2007}. 
Ozaki et al.~\cite{DBLP:conf/dlog/OzakiKR18,DBLP:conf/birthday/OzakiKR19} propose temporally attributed DLs, 
 which allows the use of absolute temporal information in both TBoxes and ABoxes. 
 They show that the satisfiability of ground $\mathcal{ELH^\mathbb{T}_@}$ ontologies is ExpTime-complete, and 
 that the satisfiability of ground $\mathcal{ELH^\mathbb{T}_@}$ ontologies without the 
 temporal attributes \textsf{between}, \textsf{before} and \textsf{after} is PTime-complete. 

 \section{Conclusion}\label{sec:conclusion}
 We investigate how geometric models can (or cannot) be used to capture rules about annotated data expressed in the formalism of attributed DLs. We show that every satisfiable attributed \dlliterhorn ontology has a convex geometric model and that this is also the case when allowing the use of temporal attributes under some restrictions.  
There is still a long way to make this result practical 
since we still require an 
embedding technique that would construct such a model. 
In this direction, we highlight the work of Abboud et al.~\cite{DBLP:conf/nips/AbboudCLS20}, where relations are mapped
to convex regions in the format of hyper-rectangles.
 
%\section*{Acknowledgements} 
\mypar{Acknowledgements.} 
We thank Bruno Figueira Lourenço for his contribution on the proof of Theorem~\ref{thm:linearmapTh}. %s regarding linear isomorphisms in Appendix A.
Ozaki is supported by the Norwegian Research Council, grant number 316022.

\bibliographystyle{plain}
\bibliography{referencesshort}

\appendix

\section{Proof of Theorem 3}

Our definitions of geometric interpretations and models
 are based on the work by Gutiérrez-Basulto and Schockaert~\cite{DBLP:conf/kr/Gutierrez-Basulto18}.
The main difference is that in 
Definition~\ref{def:sat-assertions}
we allow any kind of linear map that respects some conditions (in particular, (i) to (iii)), instead of using vector concatenation for  representing tuples in the vector space. 
Here we show 
that our  conditions for the linear maps
are harmless in the sense that all results obtained in the mentioned 
work still hold with the more general conditions. 

Let $f: \mathbb{R}^m\times\mathbb{R}^n\mapsto \mathbb{R}^{m+n}$ be a linear map satisfying the following: % the following properties.
\begin{enumerate}[label=\textit{(\roman*)}]
\item \label{item1} the restriction of $f$ to $\mathbb{R}^m\times \{0\}^{n}$ is injective;
\item \label{item2} the restriction of $f$ to $ \{0\}^{m}\times \mathbb{R}^n$ is injective;
\item  \label{item3} $f(\mathbb{R}^m\times \{0\}^n)\cap f(\{0\}^m\times \mathbb{R}^n)=\{0^{m+n}\}$;
\end{enumerate} 
where $0^{n}$ denotes the vector $(0,\ldots,0)$ with $n$ zeros.

\begin{proposition}\label{prop:bijective}
Any linear map satisfying \ref{item1}, \ref{item2}, and \ref{item3} is bijective.
\end{proposition}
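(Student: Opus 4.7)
The plan is to exploit the fact that both the domain $\mathbb{R}^m\times\mathbb{R}^n$ and the codomain $\mathbb{R}^{m+n}$ have the same (finite) dimension $m+n$, so by the rank–nullity theorem it suffices to prove injectivity; equivalently, that $\ker f = \{(0^m,0^n)\}$. Surjectivity then comes for free.

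To show the kernel is trivial, I would take an arbitrary element $(v,w)\in\mathbb{R}^m\times\mathbb{R}^n$ with $f(v,w)=0^{m+n}$ and use linearity to split it as $f(v,w)=f(v,0^n)+f(0^m,w)$. Rearranging gives $f(v,0^n) = -f(0^m,w) = f(0^m,-w)$. The left-hand side lies in $f(\mathbb{R}^m\times\{0\}^n)$ and the right-hand side lies in $f(\{0\}^m\times\mathbb{R}^n)$, so by condition \textit{(iii)} both sides equal $0^{m+n}$. At this point condition \textit{(i)} forces $v=0^m$ (since $f(v,0^n)=0^{m+n}=f(0^m,0^n)$ and the restriction is injective), and condition \textit{(ii)} analogously forces $w=0^n$. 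Thus $\ker f=\{(0^m,0^n)\}$ and $f$ is bijective.

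I do not foresee any real obstacle: the statement is essentially a packaging of standard linear algebra, and conditions \textit{(i)}–\textit{(iii)} were tailored so that this kernel argument goes through. The only mild subtlety is noticing that one needs all three conditions simultaneously — \textit{(iii)} to collapse the rearranged equation onto the zero vector, and then \textit{(i)} and \textit{(ii)} separately to conclude that each component of the kernel element vanishes — but this is a short observation rather than a technical difficulty.
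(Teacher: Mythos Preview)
Your proposal is correct and follows essentially the same approach as the paper: split $f(v,w)$ by linearity into its two components, use condition \textit{(iii)} to force each component to zero, then apply \textit{(i)} and \textit{(ii)} to conclude $v=0^m$ and $w=0^n$, and finish with rank--nullity. The only cosmetic difference is that the paper states the kernel claim as a separate Claim and then derives injectivity and surjectivity in two short paragraphs, whereas you fold everything into a single argument.
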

\begin{proof}
Let $f$ be a linear map satisfying \ref{item1}, \ref{item2}, and \ref{item3}. 
\begin{claim}
The \emph{kernel of $f$} $\{x\in\mathbb{R}^m\times\mathbb{R}^n\mid f(x)=0^{m+n}\}$ is
equal to $\{0^{m+n}\}$.
\end{claim}

\noindent
\textit{Proof of the Claim.}
Suppose that $(x,y)$
belongs to the kernel of $f$, where  $x\in \mathbb{R}^m$ and $y\in\mathbb{R}^n$. Then,
$f(x,y)=0^{m+n}$.
Since $f$ is a linear map, 
addition is preserved, so $f(x,0^n)+f(0^m,y)=f(x,y)=0^{m+n}$. Because of~\ref{item3}, we conclude that  
$f(x,0^{n})=f(0^{m},y)=0^{m+n}$. By \ref{item1}-\ref{item2}, $x=0^m$ and $y=0^n$, as required. 

\medskip

Suppose $f(x)=f(y)$. In other words, $0^{m+n}=f(x)-f(y)$. 
Since $f$ is a linear map, $0^{m+n}=f(x)-f(y)=f(x-y)$. This means that
$x-y$ is in the kernel of $f$ and, 
by the Claim, $x-y=0^{m+n}$. Thus, $x=y$, which means that $f$ is injective.

By the Claim, the dimension of the kernel of $f$ is $0$. Since the dimension
of the domain of $f$ is $m+n$, by the rank-nullity theorem, the dimension
of the image of $f$ must be  $m+n$. This means that   $f$ is also surjective.
\qed
\end{proof}

By Proposition~\ref{prop:bijective},  any linear map satisfying
\ref{item1}, \ref{item2}, and \ref{item3} is bijective and, by the next proposition, 
any such linear map could be used in the formalisation as it would yield  isomorphic  regions.  
Given $S\subseteq \mathbb{R}^n$, we denote by ${\sf co}(S)$ the convex hull of $S$.

\begin{proposition}\label{prop:isomorphism}
Let $f: \mathbb{R}^m\times\mathbb{R}^n\mapsto \mathbb{R}^{m+n}$ and
$g: \mathbb{R}^m\times\mathbb{R}^n\mapsto \mathbb{R}^{m+n}$ be bijective linear maps.
Let $h:= f\circ g^{-1}$. Then, for every $S\subseteq \mathbb{R}^m\times\mathbb{R}^n$,
we have $$h({\sf co}(f(S)))={\sf co}(g(S)).$$
\end{proposition}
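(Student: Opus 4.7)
The plan is to reduce the claim to the classical fact that bijective linear maps between real vector spaces commute with the convex-hull operator, i.e.\ for a linear map $\phi:V\to W$ and $T\subseteq V$ one has $\phi(\mathsf{co}(T))=\mathsf{co}(\phi(T))$. Granting this lemma, the proposition becomes a short computation once $h$ is identified as a bijective linear endomorphism of $\mathbb{R}^{m+n}$.

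First I would check that $h$ is a bijective linear map on $\mathbb{R}^{m+n}$: the inverse of a bijective linear map between finite-dimensional real vector spaces is again linear (this is standard), and the composition of bijective linear maps is bijective and linear. Combined with Proposition~\ref{prop:bijective}, this applies in particular to any two maps $f,g$ satisfying the conditions \textit{(i)}--\textit{(iii)} considered earlier in the appendix. Next I would verify the convex-hull commutation lemma itself: the inclusion $\phi(\mathsf{co}(T))\subseteq \mathsf{co}(\phi(T))$ is immediate, because linearity sends a finite convex combination $\sum_i\lambda_i t_i$ to the convex combination $\sum_i\lambda_i\phi(t_i)$ of elements of $\phi(T)$; the reverse inclusion follows symmetrically from $\sum_i\lambda_i\phi(t_i)=\phi(\sum_i\lambda_i t_i)\in\phi(\mathsf{co}(T))$.

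With the lemma in hand I would apply it to $\phi=h$ and $T=f(S)$, obtaining $h(\mathsf{co}(f(S)))=\mathsf{co}(h(f(S)))$, and then identify $h(f(S))$ with $g(S)$ by unfolding the definition of $h$ as a composition involving $f$, $g$ and an inverse, using the bijectivity of the component maps to cancel the inner $f$ (respectively $g$) on $S$. The desired equality $h(\mathsf{co}(f(S)))=\mathsf{co}(g(S))$ then follows directly. I expect no serious obstacle: the only piece with mathematical content is the convex-hull commutation lemma, which is standard but sits at the heart of the argument and deserves a clean statement; everything else is bookkeeping that hinges on the bijectivity established in Proposition~\ref{prop:bijective}.
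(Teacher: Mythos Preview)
Your overall strategy is the same as the paper's: both arguments rest on the fact that a linear map sends a convex combination $\sum_i \alpha_i f(s_i)$ to $\sum_i \alpha_i h(f(s_i))$. The paper does this computation inline for one inclusion and invokes $h^{-1}$ for the other; you package it as the lemma $\phi(\mathsf{co}(T))=\mathsf{co}(\phi(T))$ and then reduce to identifying $h(f(S))$. That is a purely cosmetic difference.

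There is, however, a real issue at the very last step. With $h:=f\circ g^{-1}$ as stated, the identity $h(f(S))=g(S)$ that you need is \emph{false}: one has $h\circ f=f\circ g^{-1}\circ f$, and nothing cancels. (A one-line counterexample: on $\mathbb{R}$ take $f=\mathrm{id}$, $g=2\,\mathrm{id}$, $S=\{1\}$; then $h(\mathsf{co}(f(S)))=\{1/2\}$ while $\mathsf{co}(g(S))=\{2\}$.) What does hold is $h\circ g=f$, i.e.\ the correct definition for the proposition to be true is $h:=g\circ f^{-1}$. The paper's own proof silently uses this corrected version, since it asserts $h(f(s_i))=g(s_i)$. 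Your phrase ``cancel the inner $f$ (respectively $g$)'' is precisely the place where, had you actually written out the composition, you would have spotted the mismatch. So either flag the typo and work with $h=g\circ f^{-1}$, or restate the conclusion as $h(\mathsf{co}(g(S)))=\mathsf{co}(f(S))$; as written the final identification does not go through.
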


\begin{proof}
Let $S\subseteq \mathbb{R}^m\times\mathbb{R}^n$ and let $x \in {\sf co}(f(S))$.
Then, there are $\alpha_1,\dots,\alpha_k\in [0,1]$ and $s_1,\ldots,s_k\in S$ such that
$$x= \sum^k_{i=1}\alpha_i f(s_i), \qquad  \sum^k_{i=1} \alpha_i =1.$$
By definition of $h$ and its linearity, we have
$$h(x)=\sum^k_{i=1}\alpha_i g(s_i)$$
therefore $h(x)\in {\sf co} (g(S))$.
Conversely, if $x \in {\sf co}( g (S))$, an analogous argument shows that $h^{-1}(x) \in f(S)$.
This shows that,
indeed, $h({\sf co}(f(S)))={\sf co}(g(S))$. \qed
\end{proof}

\section{Proof of Theorem 4}

\lemfundam*
\begin{proof}
We start with the case where $\alpha$ is a concept assertion. 
\begin{enumerate}
\item $\Imc(\eta, D^\star)\models \syntaxpatom{A(a)}{S}$ iff $(a^{\Imc(\eta, D^\star)},  F) \in A^{\Imc(\eta, D^\star)}$ for some $F\in S^{\Imc(\eta, D^\star)}$. 
\item $a^{\Imc(\eta, D^\star)}=\eta(a)$.
\item Since $S$ is closed, $S^{\Imc(\eta, D^\star)}=\{F^\star_S\}$. 
\item It follows from (1), (2) and (3) that $\Imc(\eta, D^\star)\models \syntaxpatom{A(a)}{S}$ iff $(\eta(a)  ,F^\star_S) \in A^{\Imc(\eta, D^\star)}$. 
\item By definition of $\Imc(\eta, D^\star)$, $(\eta(a)  ,F^\star_S) \in A^{\Imc(\eta, D^\star)}$ iff $\eta(a) \in \eta(\syntaxpatom{A}{S})$, \ie $\eta\models \syntaxpatom{A(a)}{S}$.
\end{enumerate}
The proof for the case where $\alpha$ is a role assertion is similar.
\\

We next show the case where $\alpha$ is a concept inclusion, \ie  
$\alpha$ is either of the form $\bigsqcap_{i=1}^n \syntaxpatom{B_i}{S_i}\sqsubseteq \syntaxpatom{B}{S}$ or $\bigsqcap_{i=1}^n \syntaxpatom{B_i}{S_i}\sqsubseteq \bot$ where $B_i$ and $B$ are of the form $A_i$ or $\exists P_i$.\smallskip

Assume that $\Imc(\eta, D^\star)\models \alpha$. 
Let $\delta\in \mathbb{R}^m$ be such that $\delta\in \eta(\bigsqcap_{i=1}^n\syntaxpatom{B_i}{S_i})=\bigcap_{i=1}^n\eta(\syntaxpatom{B_i}{S_i})$. 
For $1\leq i\leq n$: 
\begin{enumerate}
    \item $\delta\in \eta(\syntaxpatom{B_i}{S_i})$ so $(\delta ,F^\star_{S_i})\in B_i^{\Imc(\eta,D^\star)}$;
    \item $F^\star_{S_i}\in S_i^{{\Imc(\eta,D^\star)}, \Zmc}$ for every variable assignment $\Zmc$;
    \item (1) and (2) implies that $\delta\in (\syntaxpatom{B_i}{S_i})^{{\Imc(\eta,D^\star)}, \Zmc}$ for every variable assignment $\Zmc$. 
\end{enumerate}
It follows that $\delta\in (\bigsqcap_{i=1}^n\syntaxpatom{B_i}{S_i})^{{\Imc(\eta,D^\star)}, \Zmc}$. 
\begin{itemize}
    \item Assume that $\alpha$ is of the form $\bigsqcap_{i=1}^n \syntaxpatom{B_i}{S_i}\sqsubseteq \syntaxpatom{B}{S}$. 
    \begin{enumerate}
        \item Since $\Imc(\eta, D^\star)\models\alpha$, then $\delta\in (\syntaxpatom{B}{S})^{{\Imc(\eta,D^\star)}, \Zmc}$. 
        \item Hence there exists $F\in S^{\Inter,\Zuweisung}$ such that $(\delta, F)\in B^{{\Imc(\eta,D^\star)}}$.
        \item By definition of $B^{{\Imc(\eta,D^\star)}}$, it follows that $\delta\in \eta(\syntaxpatom{B}{S_F})$ with $S_F$ a ground (closed or open) specifier that contains exactly the pairs $a\syntaxequality b$ such that $(a,b)$ are in $F$, except $(\star,\star)$.
        \item  By (2), $F\in S^{\Inter,\Zuweisung}$ so the attribute-value pairs in $S_F$ form a superset of those in $S$ and either $S_F=S$ if $S$ is closed, or $S_F\implies S$ if $S$ is open. In both cases, $\eta(\syntaxpatom{B}{S_F})\subseteq  \eta(\syntaxpatom{B}{S})$. 
    \item By (3) and (4), $\delta\in \eta(\syntaxpatom{B}{S})$.
    \end{enumerate}
    It follows that $\eta(\bigsqcap_{i=1}^n\syntaxpatom{B_i}{S_i})\subseteq \eta(\syntaxpatom{B}{S})$
\item Assume that $\alpha$ is of the form $\bigsqcap_{i=1}^n \syntaxpatom{B_i}{S_i}\sqsubseteq \bot$.  Since $\Imc(\eta, D^\star)\models\alpha$, then $\delta\in\emptyset$. This is a contradiction so $\eta(\bigsqcap_{i=1}^n\syntaxpatom{B_i}{S_i})$ must be empty. Hence 
$$\eta(\bigsqcap_{i=1}^n\syntaxpatom{B_i}{S_i})\subseteq \eta(\bot).$$
\end{itemize}
We conclude that $\eta\models \alpha$.\smallskip

In the other direction, assume that $\eta\models \alpha$. Let $\delta\in (\bigsqcap_{i=1}^n\syntaxpatom{B_i}{S_i})^{\Imc(\eta,D^\star)}=\bigcap_{i=1}^n\syntaxpatom{B_i}{S_i}^{\Imc(\eta,D^\star)}$. 
For $1\leq i\leq n$:
\begin{enumerate}
    \item There exists $(\delta, F_i)\in B_i^{\Imc(\eta,D^\star)}$ such that $F_i\in S_i^{\Imc(\eta,D^\star),\Zmc}$ (for any $\Zmc$).
    \item By (1) and the definition of $\Imc(\eta,D^\star)$, $\delta\in \eta(\syntaxpatom{B_i}{T_i})$ for some $T_i$ that contains exactly the attribute value pairs in $F_i$, except $(\star,\star)$, and such that $T_i$ is closed if $(\star,\star)\notin F_i$, and open otherwise.
    \item By (1), $F_i\in S_i^{\Imc(\eta,D^\star),\Zmc}$ so by definition of $T_i$ in (2), it is easy to check that when $S_i$ is closed $T_i=S_i$ and when $S_i$ is open so is $T_i$ and $T_i\implies S_i$. Hence, in both cases $\eta(\syntaxpatom{B_i}{T_i})\subseteq \eta(\syntaxpatom{B_i}{S_i})$
\end{enumerate}
By (2) and (3) $\delta\in \eta(\bigsqcap_{i=1}^n\syntaxpatom{B_i}{S_i})$. 
\begin{itemize}
    \item Assume that $\alpha$ is of the form $\bigsqcap_{i=1}^n \syntaxpatom{B_i}{S_i}\sqsubseteq \syntaxpatom{B}{S}$. Since $\eta\models \alpha$, it follows that $\delta\in \eta(\syntaxpatom{B}{S})$. By construction of $\Imc(\eta,D^\star)$, it follows that $(\delta, F_S^\star)\in B^{\Imc(\eta,D^\star)}$. Since $F_S^\star\in S^{\Imc(\eta,D^\star),\Zmc}$ for any $\Zmc$, $\delta\in (\syntaxpatom{B}{S})^{\Imc(\eta,D^\star)}$. 
    Hence 
    $$(\bigsqcap_{i=1}^n\syntaxpatom{B_i}{S_i})^{\Imc(\eta,D^\star)}\subseteq (\syntaxpatom{B}{S})^{\Imc(\eta,D^\star)}.$$
    \item Assume that $\alpha$ is of the form $\bigsqcap_{i=1}^n \syntaxpatom{B_i}{S_i}\sqsubseteq \bot$. Since 
    $\eta\models \alpha$, then $\delta\in\emptyset$. This is a contradiction so 
    $(\bigsqcap_{i=1}^n\syntaxpatom{B_i}{S_i})^{\Imc(\eta,D^\star)}$ must be empty. 
    Hence $$(\bigsqcap_{i=1}^n\syntaxpatom{B_i}{S_i})^{\Imc(\eta,D^\star)}\subseteq \bot^{\Imc(\eta,D^\star)}.$$
\end{itemize}
We conclude that $\Imc(\eta,D^\star)\models \alpha$.\\

Finally, we show the case where $\alpha$ is a role inclusion, \ie $\alpha$ is either of the form $\syntaxpatom{P}{S}\sqsubseteq \syntaxpatom{Q}{T}$ or $\syntaxpatom{P}{S}\sqsubseteq \neg\syntaxpatom{Q}{T}$, where $P$ and $Q$ are of the form $R$ or $R^-$ for some $R\in \NR$.\smallskip

Assume that $\Imc(\eta, D^\star)\models \alpha$. 
Let $\delta_1,\delta_2\in \mathbb{R}^m$ be such that $f(\delta_1,\delta_2)\in \eta(\syntaxpatom{P}{S})$. 
By construction of $\Imc(\eta,D^\star)$, $(\delta_1, \delta_2 ,F^\star_{S})\in P^{\Imc(\eta,D^\star)}$. 
Moreover, $F^\star_{S}\in S^{{\Imc(\eta,D^\star)}, \Zmc}$ for every variable assignment $\Zmc$. Hence, $(\delta_1, \delta_2)\in (\syntaxpatom{P}{S})^{{\Imc(\eta,D^\star)}, \Zmc}$ for every $\Zmc$. 
\begin{itemize}
    \item Assume that $\alpha$ is of the form $\syntaxpatom{P}{S}\sqsubseteq \syntaxpatom{Q}{T}$. 
    \begin{enumerate}
        \item Since $\Imc(\eta, D^\star)\models\alpha$, $(\delta_1, \delta_2)\in (\syntaxpatom{Q}{T})^{{\Imc(\eta,D^\star)}, \Zmc}$.
        \item By (1), there exists $F\in T^{\Inter,\Zuweisung}$ such that $(\delta_1, \delta_2, F)\in Q^{{\Imc(\eta,D^\star)}}$.
        \item By (2) and definition of $Q^{{\Imc(\eta,D^\star)}}$, it follows that $f(\delta_1, \delta_2)\in \eta(\syntaxpatom{Q}{S_F})$ with $S_F$ a ground (closed or open) specifier that contains exactly the pairs $a\syntaxequality b$ such that $(a,b)$ are in $F$, except $(\star,\star)$. 
        \item By (2) $F\in T^{\Inter,\Zuweisung}$ so the attribute-value pairs in $S_F$ defined in (3) form a superset of those in $T$ and either $S_F=T$ if $T$ is closed, or $S_F\implies T$ if $T$ is open. 
In both cases, $\eta(\syntaxpatom{Q}{S_F})\subseteq  \eta(\syntaxpatom{Q}{T})$. 
    \item By (3) and (4), $f(\delta_1, \delta_2)\in \eta(\syntaxpatom{Q}{T})$. 
    \end{enumerate}   
It follows that $\eta(\syntaxpatom{P}{S})\subseteq \eta(\syntaxpatom{Q}{T})$.
    \item Assume that $\alpha$ is of the form $\syntaxpatom{P}{S}\sqsubseteq \neg\syntaxpatom{Q}{T}$. 
    \begin{enumerate}
        \item Since $\Imc(\eta, D^\star)\models\alpha$, $(\delta_1, \delta_2)\in (\mathbb{R}^{m} \times \mathbb{R}^{m})\setminus(\syntaxpatom{Q}{T})^{{\Imc(\eta,D^\star)}, \Zmc}$.
        \item By (1), for every $F\in T^{\Inter,\Zuweisung}$, $(\delta_1, \delta_2, F)\notin Q^{{\Imc(\eta,D^\star)}}$. 
        \item (2) implies that $f(\delta_1, \delta_2)\notin \eta(\syntaxpatom{Q}{T})$ (otherwise we would have $(\delta_1, \delta_2 ,F^\star_T)\in Q^{{\Imc(\eta,D^\star)}}$ and $F^\star_T\in T^{\Inter,\Zuweisung}$). Hence $f(\delta_1, \delta_2)\in \mathbb{R}^{2\cdot m} \setminus \eta(\syntaxpatom{Q}{T}) =\eta(\neg\syntaxpatom{Q}{T})$.
    \end{enumerate} 
  It follows that $\eta(\syntaxpatom{P}{S})\subseteq \eta(\neg\syntaxpatom{Q}{T})$.
\end{itemize}
We conclude that $\eta\models \alpha$.
\smallskip

In the other direction, assume that $\eta\models \alpha$. Let $(\delta_1,\delta_2)\in (\syntaxpatom{P}{S})^{\Imc(\eta,D^\star)}$. There exists $(\delta_1,\delta_2, F)\in P^{\Imc(\eta,D^\star)}$ such that $F\in S^{\Imc(\eta,D^\star),\Zmc}$ (for any $\Zmc$). By construction of $\Imc(\eta,D^\star)$, it follows that $f(\delta_1,\delta_2)\in \eta(\syntaxpatom{P}{S'})$ for some $S'$ that contains exactly the attribute value pairs in $F$, except $(\star,\star)$, and such that $S'$ is closed if $(\star,\star)\notin F$, and open otherwise. It is easy to check that when $S$ is closed $S'=S$ and when $S$ is open so is $S'$ and $S'\implies S$. Hence, in both cases $\eta(\syntaxpatom{P}{S'})\subseteq \eta(\syntaxpatom{P}{S})$, so that $f(\delta_1,\delta_2)\in \eta(\syntaxpatom{P}{S})$. 

\begin{itemize}
\item Assume that $\alpha$ is of the form $\syntaxpatom{P}{S}\sqsubseteq \syntaxpatom{Q}{T}$. 
\begin{enumerate}
    \item Since $\eta\models\alpha$, $f(\delta_1,\delta_2)\in\eta(\syntaxpatom{Q}{T})$.
    \item By (1) and construction of $\Imc(\eta,D^\star)$, $(\delta_1,\delta_2, F_T^\star)\in Q^{\Imc(\eta,D^\star)}$.
    \item By (2) and since $F_T^\star\in T^{\Imc(\eta,D^\star),\Zmc}$ for any $\Zmc$, $(\delta_1,\delta_2)\in (\syntaxpatom{Q}{T})^{\Imc(\eta,D^\star)}$.
\end{enumerate} 
 It follows that $(\syntaxpatom{P}{S})^{\Imc(\eta,D^\star)}\subseteq (\syntaxpatom{Q}{T})^{\Imc(\eta,D^\star)}$.

\item Assume that $\alpha$ is of the form $\syntaxpatom{P}{S}\sqsubseteq \neg\syntaxpatom{Q}{T}$. 
\begin{enumerate}
    \item Since $\eta\models\alpha$, $f(\delta_1,\delta_2)\in\mathbb{R}^{2\cdot m} \setminus\eta(\syntaxpatom{Q}{T})$.
    \item (1) implies $(\delta_1,\delta_2)\notin (\syntaxpatom{Q}{T})^{\Imc(\eta,D^\star)}$. Indeed, otherwise there would exist $F\in T^{\Imc(\eta,D^\star),\Zmc}$ such that $(\delta_1,\delta_2, F)\in (Q)^{\Imc(\eta,D^\star)}$: in the case where $T$ is closed $F$ would be equal to $F_T^\star$ and it would imply that $f(\delta_1,\delta_2)\in\eta(\syntaxpatom{Q}{T})$. In the case where $T$ is open, $F$ would be such that there exists $S_F\implies T$ such that  $f(\delta_1,\delta_2)\in\eta(\syntaxpatom{Q}{S_F})$ and thus $f(\delta_1,\delta_2)\in\eta(\syntaxpatom{Q}{S_F})\subseteq \eta(\syntaxpatom{Q}{T})$.
    \item By (2), $(\delta_1,\delta_2)\in (\mathbb{R}^{m} \times \mathbb{R}^{m})\setminus (\syntaxpatom{Q}{T})^{\Imc(\eta,D^\star)}=(\neg\syntaxpatom{Q}{T})^{\Imc(\eta,D^\star)}$. 
\end{enumerate}
It follows that $(\syntaxpatom{P}{S})^{\Imc(\eta,D^\star)}\subseteq (\neg\syntaxpatom{Q}{T})^{\Imc(\eta,D^\star)}$
\end{itemize}
We conclude that $\Imc(\eta,D^\star)\models \alpha$.
\end{proof}

\section{Proof of Theorem 6}

\dlliteaconvexmodel*

\begin{proof}
Let \Omc be a satisfiable \dllitea ontology without negative role inclusions. We use grounding to translate \Omc into an equisatisfiable \dlliterhorn ontology. Let  \Jmc be an interpretation such that $\Delta^\Jmc_\ind=\dlNames{i}$,  $\Delta^\Jmc_\ann=\dlNames{a}$ and $a^\Jmc=a$ for all $a\in\NI$. 
Let $\Zuweisung$ be a variable assignment 
mapping object variables $x \in \objectvariables$ to elements
$\Zuweisung(x) \in \NIkb{\Omc}$ 
and set variables  $X\in \setvariables$ to finite binary 
relations $\Zuweisung(X) \in \asdom{\NIkb{\Omc}}$.  
Consider a concept or 
role inclusion $I$ of the form 
  $\syntaxsetrestriction{X_1}{S_1}, \ldots, \syntaxsetrestriction{X_n}{S_n} \quad  (\bigsqcap_{i=1}^k K_i\sqsubseteq L)$.
  A variable assignment $\Zuweisung$ is said to be compatible 
with $I$ if $\Zuweisung(X_1)\in S_1^{\Jmc,\Zuweisung}, \dots, \Zuweisung(X_n)\in S_n^{\Jmc,\Zuweisung}$.
   The \emph{$\Zuweisung$-instance} $I_\Zuweisung$ of $I$ is the 
concept or role inclusion $\bigsqcap_{i=1}^k K_i'\sqsubseteq L'$ obtained by:
\begin{itemize} 
\item replacing each   $X$ with $\closedAnnotation{a \syntaxequality b\mid  \tuple{a,b}\in\Zuweisung(X) }$; 
\item replacing each assignment $a \syntaxequality \syntaxintent{X}{b}$ 
occurring in some specifier by all assignments $a \syntaxequality c$ such that $\tuple{b,c}\in\Zuweisung(X)$; and, 
\item replacing each object variable $ x$ by $\Zuweisung(x)$.
\end{itemize}

Let $\Omc_{\sf g}$ contains 
\begin{itemize}
\item 
all assertions $\syntaxpatom{E(\vec{t})}{S}$ of $\Omc$, and 
\item all $\Zuweisung$-instances $I_\Zuweisung$ for all 
concept or role inclusions $I$ in $\Omc$ and all compatible variable assignments $\Zuweisung$ (they are a finite number since $\NIkb{\Omc}$ is finite). 
\end{itemize}
The ground ontology $\Omc_{\sf g}$ can be translated into a 
standard \dlliterhorn ontology $\dltrans{\Omc_{\sf g}}$ as follows: 
replace every annotated concept/role name 
$\syntaxpatom{E}{S}$ (or inverse role $\syntaxpatom{R^-}{S}$) with a fresh concept/role name $E_S$ (or inverse role $R_S^-$) 
in all the assertions and concept or role inclusions 
of 
$\Omc_{\sf g}$, 
and extend the obtained \dlliterhorn ontology 
$\dltrans{\Omc_{\sf g}}$ by all axioms $E_S\sqsubseteq E_T$ where $S$ and $T$ are ground specifiers built over $\NIkb{\Omc}$ and $S\implies T$.

Since $\dltrans{\Omc_{\sf g}}$ is a satisfiable \dlliterhorn ontology, it has a finite model $\Jmc'$, which can be chosen such that $\Delta^{\Jmc'}=\dlNames{i}$, and $a^{\Jmc'}=a$ for every $a\in\dlNames{i}$. Moreover, since  $\dltrans{\Omc_{\sf g}}$ does not contain negative role inclusion, it is a quasi-chained ontology. By Proposition 3 in \cite{DBLP:conf/kr/Gutierrez-Basulto18}, it follows that $\dltrans{\Omc_{\sf g}}$ has a convex $m$-dimensional $\linearmap$-geometric model $\eta'$ such that $\{E(\vec{t})\mid \eta'\models E(\vec{t})\}=\{E(\vec{t})\mid \Jmc'\models E(\vec{t})\}$, with $m=|\{a\mid \Jmc'\models E(\vec{t}), a\in \vec{t} \}|$ and $\linearmap$ being vector concatenation. 

Let $\eta$ be the interpretation of $(\NC,\NR,\dlNames{i},\dlNames{a} )$ such that $\eta(a)=\eta'(a)$ for every $a\in\dlNames{i}$ and for every $E\in \NC\cup\NR$ and ground $S\in \Slang$, if $E_S$ occurs in $\dltrans{\Omc_{\sf g}}$ then $\eta(\syntaxpatom{E}{S})=\eta'(E_S)$, and otherwise $\eta(\syntaxpatom{E}{S})=\emptyset$. 
Since for all $S\implies T$ where $E_S$ and $E_T$ 
occur in $\dltrans{\Omc_{\sf g}}$, $E_S\sqsubseteq E_T\in \dltrans{\Omc_{\sf g}}$, then $\eta'(E_S)\subseteq  \eta'(E_T)$. 
Moreover, if $S$ does not occur in $\dltrans{\Omc_{\sf g}}$, then $\eta(\syntaxpatom{E}{S})=\emptyset$.  
Thus for all $S\implies T$, $\eta(\syntaxpatom{E}{S})\subseteq  \eta(\syntaxpatom{E}{T})$. It follows that $\eta$ is a $m$-dimensional $\linearmap$-geometric interpretation. 
It is clear that $\eta$ is convex since $\eta'$ is convex. 

We show that $\Imc(\eta, \NIkbstar{\Omc})$ is a model of $\Omc$. 
We have the following 
\begin{itemize}
\item $\Delta^{\Imc(\eta, \NIkbstar{\Omc})}_\ind=\mathbb{R}^m$ 
and, for all $a\in\dlNames{i}$, $a^{\Imc(\eta, \NIkbstar{\Omc})}=\eta(a)$;
\item $\Delta^{\Imc(\eta, \NIkbstar{\Omc})}_\ann= \dlNames{a}$ and for all $a\in \dlNames{a}$, $a^{\Imc(\eta,  \NIkbstar{\Omc})}=a$;
\item $A^{\Imc(\eta, \NIkbstar{\Omc})}=\{(\delta,F^\star_S)\mid \delta \in\eta(\syntaxpatom{A}{S}), 
S \text{ built on }\NIkb{\Omc}\}=
\{(\delta,F^\star_S)\mid \delta \in\eta'(A_S), $ \\ $ S \text{ built on }\NIkb{\Omc}\}$, for all $A\in\NC$;
\item  $R^{\Imc(\eta, \NIkbstar{\Omc})}=\{(\delta, \epsilon ,F^\star_S)\mid f(\delta,\epsilon)=\delta\oplus \epsilon \in\eta(\syntaxpatom{R}{S}), S \text{ built on }\NIkb{\Omc}\}=\{(\delta, \epsilon ,F^\star_S)\mid \delta\oplus \epsilon \in\eta'(R_S), S \text{ built on }\NIkb{\Omc}\}$, for all $R\in\NR$.
\end{itemize}

Let $\syntaxpatom{E(\vec{t})}{S}$ be an assertion in $\Omc$. Since $\Jmc'$ is a model of $\dltrans{\Omc_{\sf g}}$, then $\Jmc' \models E_S(\vec{t})$ and $\eta' \models E_S(\vec{t})$. 
Thus $ \oplus_{t\in\vec{t}} \eta(t)\in\eta(\syntaxpatom{E}{S})$, where $ \oplus_{t\in\vec{t}} \eta(t)=t$ if $\vec{t}=(t)$ and $ \oplus_{t\in\vec{t}} \eta(t)=t\oplus u$ if $\vec{t}=(t,u)$. 
By Theorem \ref{lem:fundam}, it follows that ${\Imc(\eta,  \NIkbstar{\Omc})}\models \syntaxpatom{E(\vec{t})}{S}$.
\smallskip

Let $\alpha = \syntaxsetrestriction{X_1}{T_1}, \ldots, \syntaxsetrestriction{X_n}{T_n} \  (\bigsqcap_{i=1}^k \syntaxpatom{E_i}{S_i}\sqsubseteq \syntaxpatom{E}{S})$ be a (concept or role) inclusion in $\Omc$ (note that in case of role inclusion, $k=n=1$ and the inclusion is positive). 
Let  $\Zuweisung$  be a variable assignment mapping object variables $x \in \objectvariables$ to elements $\Zuweisung(x) \in \NIkb{\Omc}$ and set variables  $X\in \setvariables$ to finite binary  relations $\Zuweisung(X) \in \asdom{\NIkb{\Omc}}$ that satisfies $\Zuweisung(X_1)\in T_1^{{\Imc(\eta,  \NIkbstar{\Omc})},\Zuweisung},\dots, \Zuweisung(X_n)\in T_n^{{\Imc(\eta,  \NIkbstar{\Omc})},\Zuweisung}$ 
 and let $\vec{\delta}\in  (\bigsqcap_{i=1}^k \syntaxpatom{E_i}{S_i})^{{\Imc(\eta,  \NIkbstar{\Omc})},\Zuweisung}$.
 \begin{enumerate}
\item For every $1\leq i\leq k$, $\vec{\delta}\in  \syntaxpatom{E_i}{S_i}^{{\Imc(\eta,  \NIkbstar{\Omc})},\Zuweisung}$.   
 \item It follows that for every $1\leq i\leq k$, $(\vec{\delta}, F_i)\in E_i^{\Imc(\eta,  \NIkbstar{\Omc})}$ for some $F_i\in S_i^{{\Imc(\eta,  \NIkbstar{\Omc})},\Zuweisung}$. 
 \item By (2) and by construction of $E_i^{\Imc(\eta,  \NIkbstar{\Omc})}$, $\oplus_{\delta\in\vec{\delta}} \delta\in\eta(\syntaxpatom{E_i}{S_{F_i}})$ where 
 $S_{F_i}$ is a (closed or open) specifier containing exactly the attribute-value pairs that occur in $F_i$, except $(\star,\star)$. 
\item Let ${S_i}_\Zuweisung$ be the specifier obtained by grounding $S_i$ according to $\Zuweisung$. By (2), $F_i\in S_i^{{\Imc(\eta,  \NIkbstar{\Omc})},\Zuweisung}$ so if $S_i$ is closed $S_{F_i}={S_i}_\Zuweisung$, and otherwise, $S_{F_i}\implies {S_i}_\Zuweisung$.
 \item By (4),  $\dltrans{\Omc_{\sf g}}$ contains by construction ${E_i}_{S_{F_i}}\sqsubseteq {E_i}_{{S_i}_\Zuweisung}$ for every $1\leq i\leq k$ such that ${S_i}_\Zuweisung\neq S_{F_i}$, as well as the DL translation of the $\Zuweisung$-instance of $\alpha$ :  $\bigsqcap_{i=1}^k{E_i}_{{S_i}_\Zuweisung}\sqsubseteq {E}_{{S}_\Zuweisung}\in \dltrans{\Omc_{\sf g}}$. 
 \item  By (5) and since $\eta'$ is a geometric model of $\dltrans{\Omc_{\sf g}}$:  
 $$\bigcap_{i=1}^k\eta'({E_i}_{S_{F_i}})\subseteq\bigcap_{i=1}^k \eta'({E_i}_{{S_i}_\Zuweisung}).$$ 
 \item Hence $\eta'(\bigsqcap_{i=1}^k{E_i}_{{S_i}_\Zuweisung})\subseteq \eta'({E}_{{S}_\Zuweisung})$. 
 \item  Thus $\oplus_{\delta\in\vec{\delta}} \delta\in\eta(\bigsqcap_{i=1}^k{E_i}_{S_{F_i}})=\eta'(\bigsqcap_{i=1}^k{E_i}_{S_{F_i}})$ implies that $\oplus_{\delta\in\vec{\delta}} \delta\in\eta'({E}_{{S}_\Zuweisung})=\eta({E}_{{S}_\Zuweisung})$. 
\item If $\alpha$ is a positive inclusion, it follows from (8) that $(\vec{\delta}, F^\star_{{S}_\Zuweisung})\in E^{\Imc(\eta, \NIkbstar{\Omc})}$.  It is then easy to see that $F^\star_{{S}_\Zuweisung}\in S^{{\Imc(\eta,  \NIkbstar{\Omc})},\Zuweisung}$, since $F^\star_{{S}_\Zuweisung}$ is the annotation set that contains the attribute-value pairs that occur
 in the grounding of $S$ w.r.t. $\Zuweisung$, plus $(\star,\star)$ if $S$ is open. Hence $\vec{\delta}\in \syntaxpatom{E}{S}^{\Imc(\eta, \NIkbstar{\Omc})}$. 
 \item If $\alpha$ is a negative concept inclusion, with  $\syntaxpatom{E}{S}=\bot$, it follows from (8) that $\delta\in \emptyset$, which is a contradiction. Hence $(\bigsqcap_{i=1}^k \syntaxpatom{E_i}{S_i})^{{\Imc(\eta,  \NIkbstar{\Omc})},\Zuweisung}=\emptyset$.
 \item By (9) and (10),  $(\bigsqcap_{i=1}^k \syntaxpatom{E_i}{S_i})^{{\Imc(\eta,  \NIkbstar{\Omc})},\Zuweisung}\subseteq \syntaxpatom{E}{S}^{\Imc(\eta, \NIkbstar{\Omc})}$, hence ${\Imc(\eta,  \NIkbstar{\Omc})}\models \alpha$. 
\end{enumerate}
 As $\alpha$ was an arbitrary (concept or role) inclusion in \Omc, 
 it follows that
 ${\Imc(\eta,  \NIkbstar{\Omc})}\models\Omc$, and $\eta$ is a convex geometric model of $\Omc$.
\end{proof}

\section{Proof of Theorem~12}

\temporalgeometric*
\begin{proof}
We first ground the ontology \Omc. 
Let $\akb$ be a satisfiable \dlliteat ontology and let $\NIkb{\akb}$ 
  be the union of the sets of    
  annotation names, time points, and intervals,   
  occurring in $\akb$, respectively.
  Let $\Jmc$ be an interpretation  of \dlliteat over the domain
  $\Delta^{\Jmc} = \NIkb{\akb}\cup \{ \freshconstant \}\cup \{max(\NIkb{\akb})+1,max(min(\NIkb{\akb})-1,0)\}$,
  where $\freshconstant$ is a fresh individual name,
  satisfying $a^{\Jmc} = a$ for all $a \in \NIkb{\akb}$. 
  Let $\Zuweisung: \NV \to \asdom{\Jmc}_\akb$ 
  be a variable assignment, where 
  $\asdom{\Jmc}_\akb  \defeq\powset{\Delta^\Jmc\times \Delta^\Jmc}$. 
Consider a concept or 
role inclusion $\alpha$ of the form 
  $\syntaxsetrestriction{X_1}{S_1}, \ldots, \syntaxsetrestriction{X_n}{S_n} \quad  (\bigsqcap_{i=1}^k K_i\sqsubseteq L)$. 
 A variable assignment $\Zuweisung$ is said to be compatible 
with $\alpha$ if $\Zuweisung(X_1)\in S_1^{\Jmc,\Zuweisung},\dots, \Zuweisung(X_n)\in S_n^{\Jmc,\Zuweisung}$.
The \emph{$\Zuweisung$-instance} $\alpha_\Zuweisung$ of $\alpha$ is the 
concept or role inclusion $\bigsqcap_{i=1}^k K'_i\sqsubseteq L'$ obtained by:
\begin{itemize} 
\item replacing each   $X_i$ with $\closedAnnotation{a \syntaxequality b\mid  \tuple{a,b}\in\Zuweisung(X_i) }$; 
\item replacing each assignment $a \syntaxequality \syntaxintent{X}{b}$ 
occurring in some specifier by all assignments $a \syntaxequality c$ such that $\tuple{b,c}\in\Zuweisung(X_i)$; and, 
\item replacing each object variable $ x$ by $\Zuweisung(x)$.
\end{itemize}
Let $\Omc_{\sf g}$ be the ontology containing all $\Zuweisung$-instances $\alpha_\Zuweisung$ of all 
concept or role inclusions $\alpha$ in $\Omc$ and all compatible variable assignments $\Zuweisung$ (they are a finite 
number since $\Delta^{\Jmc}$ is finite). Recall that since specifiers attached to roles are atemporal, the specifiers attached to roles in $\Omc_{\sf g}$ do not contain any temporal attributes.

Now, we translate $\Omc_{\sf g}$ into an 
\dllitea ontology $\Omc^\dagger_{\sf g}$ as follows.

First, $\Omc^\dagger_{\sf g}$ contains every axiom from $\Omc_{\sf g}$ (but in \dllitea
temporal attribute value pairs will be treated in the same way as the ``regular'' attribute value pairs).
Second, given a ground specifier $S$, we denote by $S(a\syntaxequality b)$
the result of removing all temporal attributes from $S$ and adding the
pair $a\syntaxequality b$.
Let $k_{\sf min}$ and $k_{\sf max}$ be the smallest and the largest numbers occurring in \Omc (or $0$ if none occurs).
Moreover, let $S_\ta$ be the set of temporal attribute-value pairs in $S$.
By assumption, there are no temporal attributes on roles.
Then,  
for each $\syntaxconcept{A}{S}$ %and $r_S$ 
with $S_\ta\neq\emptyset$, $\Omc^\dagger_{\sf g}$ contains the equivalence 
(as usual, $\equiv$ refers to bidirectional $\sqsubseteq$ here):
\begin{equation}\label{add}
\syntaxconcept{A}{S}\equiv \!\bigsqcap_{(a:b)\in S_\ta} \!( \syntaxconcept{A}{S(a:b)})^\sharp
\end{equation}
where the concept expressions $(\syntaxconcept{A}{S(a:b)})^\sharp$  are defined as follows:
 \begin{itemize}
 \item $(\syntaxconcept{A}{S(\during:v)})^\sharp= \bigsqcap_{u\in
 \intervals{v}} \syntaxconcept{A}{S(\during:u)} $ % if $a = \during$;
\item $(\syntaxconcept{A}{S(\dt:k)})^\sharp= (\syntaxconcept{A}{S(\during:[k,k])})^\sharp $ % if $a = \during$;
 \item $( \syntaxconcept{A}{S(\since:k)})^\sharp=(\syntaxconcept{A}{S(\during:[k,k_{\sf max}])})^\sharp\sqcap
 \syntaxconcept{A}{S(\since:k_{\sf max})}$ % if $a=\since$;   
\item $( \syntaxconcept{A}{S(\unt:k)})^\sharp=(\syntaxconcept{A}{S(\during:[k_{\sf min},k])})^\sharp\sqcap
 \syntaxconcept{A}{S(\unt:k_{\sf min})}$  % if $a=\unt$;   and 
\end{itemize}
with $k\neq k_{\sf min}$ and $k\neq k_{\sf max}$. For $k\in\{k_{\sf min},k_{\sf max}\}$,  we have $(\syntaxconcept{A}{S(a:k)})^\sharp=\syntaxconcept{A}{S(a:k)}$.

Finally, given attribute-value pairs $a\syntaxequality b$ and $c\syntaxequality d$ 
for temporal attributes $a$ and $b$,
we say that $a\syntaxequality b$ \emph{implies} $c\syntaxequality d$
if $\syntaxpatom{A(e)}{\closedAnnotation{a\syntaxequality b}}\models\syntaxpatom{A(e)}{\closedAnnotation{c\syntaxequality d}}$ 
for some arbitrary $A\in\NC$ and $e\in\NI$. 
We then extend $\Omc^\dagger_{\sf g}$ with all inclusions
$\syntaxconcept{A}{S}\sqsubseteq \syntaxconcept{A}{T}$  and $\syntaxconcept{R}{S}\sqsubseteq \syntaxconcept{R}{T}$,
where $\syntaxconcept{A}{S},\syntaxconcept{A}{T}$ and $\syntaxconcept{R}{S},\syntaxconcept{R}{T}$
  occur  in $\Omc^\dagger_{\sf g}$,
including those introduced in \eqref{add}, such that 
for each temporal attribute-value pair $c\syntaxequality d$ in $T$ there is
a temporal attribute-value pair $a\syntaxequality b$ in $S$ such that
$a\syntaxequality b$ implies $c\syntaxequality d$ and:
\begin{itemize}
\item $T$ is an open specifier and 
the set of non-temporal attribute-value pairs
  in $S$ is a superset
of the set of non-temporal attribute-value pairs in $T$; or
\item $S,T$ are closed specifiers and 
the set of non-temporal attribute-value pairs
 in $S$ is equal to the set of non-temporal attribute-value pairs in
 $T$.
\end{itemize}
Whenever this happens we write $S\Rightarrow T$.

This finishes the construction of $\Omc^\dagger_{\sf g}$. 
By Theorem \ref{dlliteaconvexmodel}, $\Omc^\dagger_{\sf g}$ has a $m$-dimensional convex $f$-model $\eta'$. Let $\eta$ be defined as follows:
\begin{itemize}
\item $\eta(a)=\eta'(a)$ for every individual name $a$, 
\item for every role name $R$ and atemporal specifier $S$, 
$\eta(\syntaxpatom{R}{S})= \eta'(\syntaxpatom{R}{S})$,
\item for every concept name $A$ and specifier $S$, 
$\eta(\syntaxpatom{A}{S})=
\bigcap_{(a:b)\in S_\ta} 
\eta'( \syntaxconcept{A}{S(a:b)})^\sharp$ where
$\eta'( \syntaxconcept{A}{S(a:b)})^\sharp$ is defined as follows:
\begin{itemize}
 \item $\eta'(\syntaxconcept{A}{S(\during:v)})^\sharp= \bigcap_{u\in
 \intervals{v}} \eta'(\syntaxconcept{A}{S(\during:u)}) $ % if $a = \during$;
\item $\eta'(\syntaxconcept{A}{S(\dt:k)})^\sharp= \eta'(\syntaxconcept{A}{S(\during:[k,k])})^\sharp $ % if $a = \during$;
 \item $\eta'( \syntaxconcept{A}{S(\since:k)})^\sharp=\eta'(\syntaxconcept{A}{S(\during:[k,k_{\sf max}])})^\sharp\cap
 \eta'(\syntaxconcept{A}{S(\since:k_{\sf max})})$ % if $a=\since$;   
\item $\eta'( \syntaxconcept{A}{S(\unt:k)})^\sharp=\eta'(\syntaxconcept{A}{S(\during:[k_{\sf min},k])})^\sharp\cap
 \eta'(\syntaxconcept{A}{S(\unt:k_{\sf min})})$  % if $a=\unt$;   and 
\end{itemize}
with $k\neq k_{\sf min}$ and $k\neq k_{\sf max}$. For $k\in\{k_{\sf min},k_{\sf max}\}$,  we have $\eta'(\syntaxconcept{A}{S(a:k)})^\sharp=\eta'(\syntaxconcept{A}{S(a:k)})$.
\end{itemize}
We have that $\eta$ is convex since the $\eta'(\syntaxpatom{R}{S})$ and $\eta'(\syntaxpatom{A}{S(a:b)})$ are convex, and intersections of convex regions are convex. 

Let $(\eta_j)_{j\in\Delta_T}$ be such that 
\begin{itemize}
\item $\eta_j(a)=\eta(a)=\eta'(a)$ for every individual name $a$ and time point $j$, 
\item for every concept name $A$, specifier $S$ without temporal attribute, 
and time point $k_{\sf min}\leq j \leq k_{\sf max}$, $\eta_j(\syntaxpatom{A}{S})=\eta'(\syntaxpatom{A}{S(\during:v)})$ with $j\in v$.
For $j>k_{\sf max}$, we have $\eta_j(\syntaxpatom{A}{S})=\eta'(\syntaxpatom{A}{S(\during:[k_{\sf max},k_{\sf max}+1])})$
and, for $j<k_{\sf min}$, we have $\eta_j(\syntaxpatom{A}{S})=\eta'(\syntaxpatom{A}{S(\during:[k_{\sf min}-1,k_{\sf min}])})$. 
\end{itemize}

We show that $\eta$ is global for $(\eta_j)_{j\in \Delta_T}$ and $\NIkb{\Omc}$, \ie $\Imc(\eta, \NIkb{\Omc})$ is global for $\Imc(\eta_j, \NIkb{\Omc}\setminus (\NTimepoints\cup\NIntervals))_{j\in \Delta_T}$: let $A\in\NC$, $\delta\in \mathbb{R}^m$ and $F$ be a finite binary relation over $\NIkb{\Omc}$. 
We have $(\delta, F)\in A^{\Imc(\eta, \NIkb{\Omc})}$ iff $\delta\in \eta(\syntaxpatom{A}{S})$ for a specifier $S$ that contains exactly the same attribute-value pairs as $F$, except $(\star,\star)$. 
This is the case exactly when for all $(a,x)\in F$:
\begin{itemize}
\item if $a=\dt$, then $\delta\in \eta'(\syntaxpatom{A}{S(\during : [x,x])})$;

\item if $a= \unt$, then $\delta\in \eta'(\syntaxpatom{A}{S(\during : [j,j])})$ for every $j\leq min(x,k)$;

\item if $a= \since$, then $\delta\in \eta'(\syntaxpatom{A}{S(\during : [j,j])})$ for every $x\leq j\leq k_{\sf max}$;

\item if $a= \during$, then $\delta\in \eta'(\syntaxpatom{A}{S(\during : [j,j])})$ for every $j\in x$ and $j\leq k_{\sf max}$.
\end{itemize}
Let $R\in\NR$, $(\delta_1,\delta_2)\in \mathbb{R}^{2m}$ and $F$ be a finite binary relation over $\NIkb{\Omc}$. 
We have $(\delta_1,\delta_2, F)\in R^{\Imc(\eta, \NIkb{\Omc})}$ iff $f(\delta_1,\delta_2)\in \eta(\syntaxpatom{R}{S})$ for a specifier $S$ that contains exactly the same attribute-value pairs as $F$, except $(\star,\star)$. 
By assumption, $S$ is atemporal so $F$ is built over $\NIkb{\Omc}\setminus(\NTimepoints\cup\NIntervals)$ and $(\delta_1,\delta_2, F)\in R^{\Imc(\eta_j, \NIkb{\Omc})}$ for every $j$. 
Hence we have shown that $\Imc(\eta, \NIkb{\Omc})$ is global for $(\Imc(\eta_j, \NIkb{\Omc}\setminus(\NTimepoints\cup\NIntervals)))_{j\in \Delta_T}$.

Finally, we show below that $\Imc(\eta, \NIkbstar{\Omc})$ satisfies \Omc. 
We write the argument for assertions and concept inclusions and omit for  
 role inclusions because it is similar to the one for concept inclusions, but simpler since role inclusions do not have temporal attributes.

\noindent $\bullet$ Let $\syntaxpatom{A(t)}{S}$ be a concept assertion in $\Omc$. 
Since $\Jmc\models \syntaxpatom{A(t)}{S}$ and interprets every individual name by itself, then $(t, F_S)\in A^\Jmc$ 
(since $S$ is a closed specifier). 
Since $\Jmc$ is global for $(\Jmc_j)_{j\in \Delta_T}$, 
it follows that for every $a:x\in S$, 
\begin{itemize}
\item if $a=\dt$, then $\eta'(t)\in\eta'(\syntaxpatom{A}{S(\during: [x,x])})$
because $\syntaxpatom{A}{S}\sqsubseteq\syntaxpatom{A}{S(\during: [x,x])}\in \Omc^\dagger_{\sf g}$ and $\eta'$ is an $m$-dimensional
convex $f$-model of $\Omc^\dagger_{\sf g}$; 
\item  if $a=\unt$, then, for every $k_{\sf min}\leq j\leq  x$, $\eta'(t)\in\eta'(\syntaxpatom{A}{S(\during: [j,j])})$
and $\eta'(t)\in\eta'(\syntaxpatom{A}{S(\unt: k_{\sf min})})$
because $\syntaxpatom{A}{S}\sqsubseteq\syntaxpatom{A}{S(\during: [j,j])}\in \Omc^\dagger_{\sf g}$, 
for every $k_{\sf min}\leq j\leq  x$, $\syntaxpatom{A}{S}\sqsubseteq\syntaxpatom{A}{S(\unt: k_{\sf min})}\in \Omc^\dagger_{\sf g}$, 
and $\eta'$ is an $m$-dimensional
convex $f$-model of $\Omc^\dagger_{\sf g}$;
\item  if $a=\since$, then, for every $x\leq j\leq k_{\sf max}$, $\eta'(t)\in\eta'(\syntaxpatom{A}{S(\during: [j,j])})$
and $\eta'(t)\in\eta'(\syntaxpatom{A}{S(\since: k_{\sf max})})$
because $\syntaxpatom{A}{S}\sqsubseteq\syntaxpatom{A}{S(\during: [j,j])}\in \Omc^\dagger_{\sf g}$, for every $x\leq j\leq k_{\sf max}$,
$\syntaxpatom{A}{S}\sqsubseteq\syntaxpatom{A}{S(\since: k_{\sf max})}\in \Omc^\dagger_{\sf g}$, 
and $\eta'$ is an $m$-dimensional
convex $f$-model of $\Omc^\dagger_{\sf g}$;

\item if $a= \during$, then for every $j\in x$, %$j\leq k$ (or for $j=k$ if there is no $j\leq k$ in $[x]$), 
$\eta'(t)\in\eta'(\syntaxpatom{A}{S(\during: [j,j])})$
because $\syntaxpatom{A}{S}\sqsubseteq \syntaxpatom{A}{S(\during: [j,j])}\in \Omc^\dagger_{\sf g}$. 
\end{itemize}
Since $\eta(\syntaxpatom{A}{S})=
\bigcap_{(a:b)\in S_\ta} 
\eta'( \syntaxconcept{A}{S(a:b)})^\sharp$, 
it follows that $\eta(t)= \eta'(t)$ belongs to $\eta(\syntaxpatom{A}{S})$, \ie $\eta\models \syntaxpatom{A(t)}{S}$. 

\noindent $\bullet$ Let $\syntaxpatom{R(t_1,t_2)}{S}$ be a role assertion in $\Omc$. 
Since $\Jmc\models \syntaxpatom{R(t_1,t_2)}{S}$ and interprets every individual name by itself, 
then $(t_1,t_2, F_S)\in R^\Jmc$ (since $S$ is a closed specifier). Since $\Jmc$ is 
global for $(\Jmc_j)_{j\in \Delta_T}$, and by construction of $\Omc^\dagger_{\sf g}$, it follows that $\syntaxpatom{R(t_1,t_2)}{S}\in\Omc^\dagger_{\sf g}$. 
It follows that $f(\eta(t_1),\eta(t_2))= f(\eta'(t_1),\eta'(t_2))\in \mathbb{R}^{2 m}$ belongs to $\eta(\syntaxpatom{R}{S})=\eta'(\syntaxpatom{R}{S})$, \ie $\eta\models \syntaxpatom{R(t_1,t_2)}{S}$.

\noindent $\bullet$ Let $I = \syntaxsetrestriction{X_1}{S_1}, \ldots, \syntaxsetrestriction{X_n}{S_n} \quad  
(\bigsqcap_{i=1}^n \syntaxpatom{B_i}{T_i}\sqsubseteq \syntaxpatom{B}{S})$ be a concept inclusion in $\Omc$. 
Let  $\Zuweisung$  be a variable assignment mapping object variables $x \in \objectvariables$ to elements 
$\Zuweisung(x) \in \dlNames{a}\cup\NTimepoints\cup\NIntervals$ and set variables 
 $X\in \setvariables$ to finite binary  relations $\Zuweisung(X)$ over $\dlNames{a}\cup\NTimepoints\cup\NIntervals$ 
 that satisfy $\Zuweisung(X_1)\in S_1^{{\Imc(\eta,  \NIkbstar{\Omc})},\Zuweisung},\dots, \Zuweisung(X_n)\in S_n^{{\Imc(\eta,  \NIkbstar{\Omc})},\Zuweisung}$
 and let $\delta\in (\bigsqcap_{i=1}^k \syntaxpatom{B_i}{T_i})^{{\Imc(\eta,  \NIkbstar{\Omc})},\Zuweisung}$.  
For every $1\leq i\leq k$, we have $(\delta, F_i)\in B_i^{\Imc(\eta,  \NIkbstar{\Omc})}$ for some $F_i\in T_i^{{\Imc(\eta,  \NIkbstar{\Omc})},\Zuweisung}$. 
It follows from the construction of $B_i^{\Imc(\eta,  \NIkbstar{\Omc})}$ that $\delta \in\eta(\syntaxpatom{B_i}{S_{F_i}})$ where 
 $S_{F_i}$ is a (closed or open) specifier containing exactly the attribute-value pairs that occur in $F_i$ except $(\star,\star)$. 
 By construction, if $B_i=A_i\in \NC$, then $\eta(\syntaxpatom{B_i}{S_{F_i}})=\bigcap_{(a:b)\in S_{F_i, \ta}} 
\eta'( \syntaxconcept{A_i}{S_{F_i}(a:b)})^\sharp$. 
 Otherwise, if $B_i=\exists P_i$, then $\eta(\syntaxpatom{B_i}{S_{F_i}})=\eta'(\syntaxpatom{\exists P_i}{S_{F_i}})$. 
 Moreover, since $F_i\in T_i^{{\Imc(\eta,  \NIkbstar{\Omc})},\Zuweisung}$, then if $T_i$ is closed 
 $S_{F_i}={T_i}_\Zuweisung$, and otherwise, $S_{F_i}\implies {T_i}_\Zuweisung$, where ${T_i}_\Zuweisung$ is the specifier obtained by grounding $T_i$ according to $\Zuweisung$.  %Hence, $S_{F_i}^I\implies {T_i}_\Zuweisung^I$.
 In the case $B_i=A_i$,
this implies that for every $j$, $\eta'(\syntaxpatom{A_i}{S_{F_i}(\during : [j,j])})
\subseteq \eta'(\syntaxpatom{A_i}{{T_i}(\during : [j,j])_\Zuweisung})$ by the semantics 
of geometric interpretations. 
 It follows that $\eta(\syntaxpatom{B_i}{S_{F_i}})\subseteq\bigcap_{(a:b)\in {T_i}_{\Zuweisung, \ta}} 
\eta'( \syntaxconcept{A_i}{{T_i}_\Zuweisung(a:b)})^\sharp$. 
In the case $B_i=\exists P_i$, this implies that $\eta'(\syntaxpatom{P_i}{S_{F_i}})\subseteq \eta'(\syntaxpatom{P_i}{{T_i}_\Zuweisung})$, so $\delta\in \eta(\syntaxpatom{\exists P_i}{S_{F_i}})=\eta'(\syntaxpatom{\exists P_i}{S_{F_i}})\subseteq \eta'(\syntaxpatom{\exists P_i}{{T_i}_\Zuweisung})$. 

Since $\eta'$ is a model of $\Omc^\dagger_{\sf g}$ which contains 
a $\Zuweisung$-instance $\bigsqcap_{i=1}^n \syntaxpatom{B_i}{T_{i \Zuweisung}}\sqsubseteq 
\syntaxpatom{B}{S_\Zuweisung}$ of the concept inclusion $I$ and equivalences
\begin{equation}\label{add}
\syntaxconcept{B_{(i)}}{S_\Zuweisung}\equiv \!\bigsqcap_{(a:b)\in S_{\Zuweisung\ta}} \!( \syntaxconcept{B_{(i)}}{S_\Zuweisung(a:b)})^\sharp 
\end{equation}
for each $B_{(i)}$. 
This implies that $\delta\in \eta'(\syntaxpatom{B}{{S_\Zuweisung(\during : [j,j])}})$ for every $j$ such that 
$S_\Zuweisung \Rightarrow S_\Zuweisung(\during : [j,j])$. 
By construction of 
$\eta(\syntaxpatom{B}{{S}_\Zuweisung})$, it follows that $\delta\in \eta(\syntaxpatom{B}{{S}_\Zuweisung})$. 
It follows that $(\delta, F^\star_{{S}_\Zuweisung})\in B^{\Imc(\eta, \NIkbstar{\Omc})}$. We can see that $F^\star_{{S}_\Zuweisung}\in S^{{\Imc(\eta,  \NIkbstar{\Omc})},\Zuweisung}$, since $F^\star_{{S}_\Zuweisung}$ is the annotation set that contains the attribute-value pairs that occur
 in the grounding of $S$ w.r.t. $\Zuweisung$, plus $(\star,\star)$ if the specifier is open. 
 So $\delta\in (\syntaxpatom{B}{{S}})^{\Imc(\eta, \NIkbstar{\Omc}),\Zuweisung}$. 
This means that $\eta$ satisfies $I$. 

\noindent $\bullet$ The argument for $I$ of the form 
$\syntaxsetrestriction{X_1}{S_1}, \ldots, \syntaxsetrestriction{X_n}{S_n} \quad 
 (\bigsqcap_{i=1}^n \syntaxpatom{B_i}{T_i}\sqsubseteq \bot)$ is in the same lines
 but we have to argue that 
 if there is a variable assignment $\Zuweisung$ such that
 $(\bigsqcap_{i=1}^n \syntaxpatom{B_i}{T_{i \Zuweisung}})^{\Imc(\eta, \NIkbstar{\Omc})}$ is not the empty set
 then there is also a variable assignment $\Zuweisung'$
 such that 
  $(\bigsqcap_{i=1}^n \syntaxpatom{B_i}{T_{i \Zuweisung'}})^{\Imc(\eta', \NIkbstar{\Omc})}$ is not the empty set
 which contradicts the assumption that 
 $\eta'$
 is a convex $m$-dimensional model of  $\Omc^\dagger_{\sf g}$.
\end{proof}

\end{document}